\documentclass[11pt,a4paper,reqno]{article}
\usepackage{cite,amsmath,amsfonts,amsthm}
\usepackage{graphicx}
\usepackage{url}
\usepackage{xcolor}
\allowdisplaybreaks[4] 

\textwidth=150mm
\textheight=220mm
\topmargin=4.6mm
\oddsidemargin=4.6mm

\newcommand{\cL}{\mathcal{L}}
\renewcommand{\Re}{\mathop{\mathrm{Re}}}

\theoremstyle{plain}
\newtheorem{thm}{Theorem}
\newtheorem{prop}[thm]{Proposition}
\newtheorem{cor}[thm]{Corollary}

\theoremstyle{definition}
\newtheorem{defn}[thm]{Definition}
\newtheorem{example}[thm]{Example}

\theoremstyle{remark}
\newtheorem{rem}[thm]{Remark}

\begin{document}

\title{\bf Precise option pricing by the COS method--How to choose the truncation range }

\author{
	\sc Gero Junike\footnote{Corresponding author.} and Konstantin Pankrashkin\\[\smallskipamount]
	Carl von Ossietzky Universit\"at\\
	Institut f\"ur Mathematik\\
	26111 Oldenburg, Germany\\[\smallskipamount]
	E-mail: \url{gero.junike@uol.de}, \url{konstantin.pankrashkin@uol.de}
}
 
\date{January 2022\footnote{\color{blue}This is a post-peer-review, pre-copyedit version of the following article:

Junike, G. and Pankrashkin, K. (2022). Precise option pricing by the COS method--How to choose the truncation range, \emph{Applied Mathematics and Computation}, 421, 126935.}}

\maketitle
\begin{abstract}
The Fourier cosine expansion (COS) method is used for pricing European options numerically very fast. To apply the COS method, a truncation range for the density of the log-returns need to be provided. Using Markov’s inequality, we derive a new formula to obtain the truncation range and prove that the range is large enough to ensure convergence of the COS method within a predefined error tolerance. We also show by several examples that the classical approach to determine the truncation range by cumulants may lead to serious mispricing. Usually, the computational time of the COS method is of similar magnitude in both cases.

\bigskip

\noindent{\bf Keywords:} COS method, cosine expansion, option pricing, truncation range, Markov's inequality

\bigskip

\noindent{\bf MSC 2010 Classification:} 65T40, 42A10, 60E10, 91G20

\end{abstract}

\medskip
\section{Introduction}
In mathematical finance the logarithmic price of a stock is usually
modeled by a random variable $X$. In many financial models the probability density
function $f$ of $X$ exists but its precise structure is unknown. On the other hand the characteristic function $\varphi$ of
$X$ (that is, the Fourier transform of $f$) is often given explicitly, e.g. see models discussed in \cite{fang2009novel}. In this setting, it is necessary to compute integrals of the form
\begin{equation}
\int_\mathbb{R}{v(x)f(x)dx}\label{eq:integral}
\end{equation} 
numerically as fast as possible,
where $v$ is a function describing an insurance contract on a stock, like a \emph{call} or \emph{put
option}, which for example protects the holder against a fall
in the price of the stock. The integral is interpreted as the price of such an insurance
contract. How can we compute the integral without knowing $f$?
A straightforward, efficient and robust method to retrieve the density
of a random variable from its characteristic function
and to compute prices of call or put options is the \emph{COS method} proposed
by Fang and Oosterlee in their seminal work \cite{fang2009novel}. 

It is of utmost importance to price call and put options very fast
because stock price models are typically calibrated to given prices
of liquid call and put options by minimizing the mean-square-error
between model prices and given market prices. During the optimization routine,
model prices of call and put options need to be evaluated very often
for different model parameters.

Under suitable assumptions,
the COS method exhibits exponential convergence
and compares favorably to other Fourier-based pricing techniques, see \cite{fang2009novel}. The COS method is widely applied in mathematical finance, see for instance \cite{bardgett2019inferring,fang2009pricing,fang2011fourier,grzelak2011heston,hirsa2012computational,ruijter2012two,zhang2013efficient}, see \cite{leitao2018data, liu2019neural, liu2019pricing} for an application of the COS method in a data-driven approach. 

The main idea of the COS method is to approximate the density $f$ with infinite support on a finite range $[a,b]$. 
The truncated density is then approximated by a (finite) cosine expansion. The integral (\ref{eq:integral}) can then be approximated highly 
efficiently if $v$ describes the payoff of a put or call option and the characteristic function of $f$ is given in closed-form.

However, it is an open question how to choose the range $[a,b]$ in practice. In this article, we aim to give an answer. More precisely, given some tolerance $\varepsilon>0$, we derive the minimal length of the range $[a,b]$ such that the absolute difference of the approximation by the COS method and the integral in (\ref{eq:integral}) is less than the tolerance.

Fang and Oosterlee \cite{fang2009novel} in Eq. (49), see also Eq. (6.44) in \cite{oosterlee2019mathematical}, proposed some rule of thumb based on cumulants to get an idea how to choose $[a,b]$. In particular, they suggested
\begin{equation}
[a,b]=\begin{cases}
\left[c_{1}\pm 12\sqrt{c_{2}}\right] & ,n_{c}=2\\
\left[c_{1}\pm 10\sqrt{c_{2}+\sqrt{c_{4}}}\right] & ,n_{c}=4\\
\left[c_{1}\pm 10\sqrt{c_{2}+\sqrt{c_{4}+\sqrt{c_{6}}}}\right] & ,n_{c}=6,
\end{cases}\label{eq:cumulants}
\end{equation}
where $c_{1}$, $c_{2}$, $c_{4}$, $c_6$ are the first, second, forth and sixth cumulants of $X$. The parameter $n_c$ may be chosen by the user. If not stated otherwise, we use $n_c=4$. In Section \ref{subsec:Counterexamples}, we provide several examples where the COS method leads to serious mispricing if the truncation range is based on Equation (\ref{eq:cumulants}). Note that there are also Fourier pricing techniques based on wavelets, see \cite{Ortiz2013robust, Ortiz2016Shannon}, which do not relay on an a-priori truncation range.

This article is structured as follows: after reviewing the COS method in Section \ref{sec:Review-COS}, we provide a new proof of convergence of the COS method using elementary tools of Fourier analysis in Section \ref{sec:COS}. Using Markov's inequality, the proof allows us to derive a minimal length for the range $[a,b]$ given a pre-defined error tolerance. Numerical experiments and applications to model calibration can be found in Section \ref{sec:Applications}. Section \ref{Conclusions} concludes.

\section{\label{sec:Review-COS}Review of the COS method.}

Let $f$ be a probability density. The characteristic function $\varphi$ of $f$ is defined by
\begin{equation}
	  \label{charf}
\varphi(u)=\int_{\mathbb{R}}f(x)e^{iux}dx.
\end{equation}
Throughout the article, we assume $f$ to be centered around zero,
that is $\int_{\mathbb{R}}{xf(x)dx}=0$, and we make no additional hypothesis on the support of $f$.
This assumption is mainly
made to keep the notation simple. We
summarize the approximation by the COS method proposed by \cite{fang2009novel}, see also \cite{oosterlee2019mathematical} for detailed explanations.
For $L>0$ let $f_{L}=f1_{[-L,L]}$.  We define the following basis functions
\[
e_{k}^{L}(x):=1_{[-L,L]}(x)\cos\left(k\pi\frac{x+L}{2L}\right),\quad k=0,1,2,...
\]
and the classical cosine-coefficients of $f_{L}$
\[
a_{k}^{L}:=\frac{1}{L}\int_{-L}^{L}f(x)\cos\left(k\pi\frac{x+L}{2L}\right)dx,\quad k=0,1,2,...
\]Let $N\in\mathbb{N}$. The density
$f$ is approximated in three steps:
\begin{equation}
\begin{aligned}
f(x) & \approx f_{L}(x) \approx\sum_{k=0}^{N}{}^{\prime}a_{k}^{L}e_{k}^{L}(x)\approx\sum_{k=0}^{N}{}^{\prime}c_{k}^{L}e_{k}^{L}(x),\label{eq:app4}
\end{aligned}
\end{equation}
where $\sum{}^{\prime}$ indicates that the first summand (with $k=0$)
is weighted by one-half, and we approximate the classical cosine-coefficients $a_{k}^{L}$
by an integral over the whole real line
\begin{equation}
\begin{aligned}
a_{k}^{L} & \approx\frac{1}{L}\int_{\mathbb{R}}f(x)\cos\left(k\pi\frac{x+L}{2L}\right)dx\\ 
 & =\frac{1}{L}\Re\left\{ \varphi\left(\frac{k\pi}{2L}\right)e^{i\frac{k\pi}{2}}\right\} \nonumber \\
 & =:c_{k}^{L},\quad k=0,1,2,....
\end{aligned}
\end{equation}
The coefficients $c_{k}^{L}$ can be computed directly if the characteristic
function of $f$ is given in closed-form. 

Let $v:\mathbb{R}\to\mathbb{R}$ be a (at least locally integrable) function. For $0<M\leq L$ denote 
\begin{equation}
v_{k}^{M}:=\int_{-M}^{M}v(x)\cos\left(k\pi\frac{x+L}{2L}\right)dx,\quad k=0,1,2,...\label{eq:coefficents_v}
\end{equation}
Then \cite{fang2009novel} observed that for $M$ large enough and
replacing $f$ by its approximation (\ref{eq:app4}) it holds
\[
\int_{\mathbb{R}}f(x)v(x)dx\approx\int_{-M}^{M}\sum_{k=0}^{N}{}^{\prime}c_{k}^{L}e_{k}^{L}(x)v(x)dx=\sum_{k=0}^{N}{}^{\prime}c_{k}^{L}v_{k}^{M}
\]
and called the approximation of the integral \emph{COS method}.

Thus the density $f$ is approximated by a sum of cosine functions making use of the characteristic function
to evaluate the cosine coefficients analytically. Working with logarithmic prices, call and put options can be described by
truncated exponential functions and the cosine coefficients $v_{k}^{M}$ of call and put options can be
obtained in explicit form as well. Therefore, option prices can be computed numerically highly efficiently. 

\section{\label{sec:COS} A new framework for the COS method.}

In this section, we revisit the convergence of the COS method. The proof allows us to derive a minimal length for the
finite range $[-L,L]$, given some error tolerance between the integral and its approximation by the COS method. 

Let $ \cL^{1}$ and $ \cL^{2}$ denote the sets of integrable and square integrable real-valued functions on $\mathbb{R}$, and by $\left\langle \cdot, \cdot\right\rangle $ and $\left\Vert\cdot \right\Vert _{2}$ we denote the scalar product and the norm on $\mathcal{L}^{2}$. We denote by $x\lor y$ the maximum value of two real numbers $x$, $y$.

\begin{defn}\label{def:COS admissible}
A  function $f\in\cL^1$ is called \emph{COS-admissible}, if
\begin{equation*}
B(L):=\sum_{k=0}^{\infty}\frac{1}{L}\bigg|\int_{\mathbb{R}\setminus[-L,L]}f(x)\cos\left(k\pi\tfrac{x+L}{2L}\right)dx\bigg|^{2}\rightarrow0
\text{ as } L\rightarrow\infty.
\end{equation*}
\end{defn}

\begin{prop}\label{prop3}
Assume that $f\in\cL^1\cap\cL^2$ with
\begin{equation*}
	\label{provf}
	\int_\mathbb{R} | x f(x)|^2dx<\infty,
\end{equation*}
then
\begin{equation*}
B(L)\leq\frac{2}{3}\frac{\pi^{2}}{L^{2}}\int_{\mathbb{R}\setminus[L,L]}|xf(x)|^{2}dx
\end{equation*} 
and $f$ is COS-admissible.
\end{prop}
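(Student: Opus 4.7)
The approach is periodization followed by Bessel's inequality and a pointwise Cauchy-Schwarz that extracts the weight $x^2 f(x)^2$.

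First, splitting $\mathbb{R}\setminus[-L,L]$ into slabs $[(2j-1)L,(2j+1)L]$ for $j\neq 0$ and substituting $y=x-2jL$ on each, the $4L$-periodicity identity $\cos\!\big(k\pi\tfrac{y+2jL+L}{2L}\big) = (-1)^{jk}\cos\!\big(k\pi\tfrac{y+L}{2L}\big)$ gives
\[
\int_{\mathbb{R}\setminus[-L,L]} f(x)\cos\!\big(k\pi\tfrac{x+L}{2L}\big)dx = \int_{-L}^{L}\phi_k(y)\cos\!\big(k\pi\tfrac{y+L}{2L}\big)dy,
\]
with $\phi_k(y):=\sum_{j\neq 0}(-1)^{jk}f(y+2jL)$. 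Since $\phi_k$ depends on $k$ only through its parity, there are really only two functions $\phi_e,\phi_o\in L^2[-L,L]$.

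Second, since $\cos\!\big(2m\pi\tfrac{y+L}{2L}\big) = (-1)^m\cos(m\pi y/L)$ and $\cos\!\big((2m+1)\pi\tfrac{y+L}{2L}\big) = (-1)^{m+1}\sin((2m+1)\pi y/(2L))$, I invoke Bessel's inequality on $L^2[-L,L]$ against the orthogonal families $\{\cos(m\pi y/L)\}_{m\geq 0}$ and $\{\sin((2m+1)\pi y/(2L))\}_{m\geq 0}$, each of squared $L^2$-norm $L$ (with the exception of the constant mode, of squared norm $2L$). This reduces $B(L)$ to a constant multiple of $\|\phi_e\|^2+\|\phi_o\|^2$.

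Third, by the pointwise Cauchy-Schwarz inequality,
\[
|\phi_e(y)|^2,\;|\phi_o(y)|^2 \;\leq\; \bigg(\sum_{j\neq 0}\frac{1}{(y+2jL)^2}\bigg)\bigg(\sum_{j\neq 0}(y+2jL)^2 f(y+2jL)^2\bigg),
\]
with identical right-hand side for both functions. For $|y|\leq L$ and $j\neq 0$ one has $|y+2jL|\geq |j|L$, so the first factor is uniformly at most $\tfrac{2}{L^2}\sum_{j\geq 1}j^{-2}=\tfrac{\pi^2}{3L^2}$; integrating the second factor against $dy$ and unfolding the slabs reproduces $\int_{|x|>L}|xf|^2 dx$ exactly. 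This yields $\|\phi_e\|^2,\|\phi_o\|^2\leq \tfrac{\pi^2}{3L^2}\int_{|x|>L}|xf|^2 dx$, and assembling the pieces produces the claimed bound. COS-admissibility then follows from dominated convergence, since $xf\in L^2(\mathbb{R})$ forces $\int_{|x|>L}|xf|^2 dx\to 0$ as $L\to\infty$. The most delicate step is bookkeeping the $k=0$ contribution: its summand in $B(L)$ carries twice the weight that Bessel's inequality attaches to the constant mode on $L^2[-L,L]$, so one must either bound the $k=0$ integral directly (e.g.\ via $\big|\int_{|x|>L} f\,dx\big|^2\leq (2/L)\int_{|x|>L}|xf|^2 dx$ by Cauchy-Schwarz in $1/x$) or invoke Parseval on $L^2[0,L]$ where the constant mode has the correct $1/L$ normalization.
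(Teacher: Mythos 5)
Your route is genuinely different from the paper's and is sound in its main lines. The paper first splits the two tails (paying a factor $4$ via $|a+b|^2\le 2(|a|^2+|b|^2)$, which also absorbs the half-weight of the $k=0$ mode), then applies the \emph{exact} Parseval identity on each translated slab $[2jL-L,2jL+L]$, and finally does Cauchy--Schwarz in the slab index $j$ with weights $1/j$, converting $j^2\le x^2/L^2$ into the moment $\int|xf|^2$. You instead periodize first, reduce everything to two fixed functions $\phi_e,\phi_o$ on $[-L,L]$, apply Bessel once per parity class, and run Cauchy--Schwarz pointwise in $y$ across the slabs. The two arguments use the same two ingredients ($\sum_j j^{-2}=\pi^2/6$ and orthogonality over one period), and for the modes $k\ge1$ your version is arguably tidier: it produces $\|\phi_e\|_2^2+\|\phi_o\|_2^2\le \tfrac{2\pi^2}{3L^2}\int_{|x|>L}|xf(x)|^2dx$ directly, which is exactly the paper's constant. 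Your verification of COS-admissibility is complete.

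However, you do not actually obtain the stated inequality with the constant $\tfrac{2}{3}\pi^2$, and the issue is precisely the $k=0$ term you flag. In $B(L)$ that term carries weight $\tfrac1L$, while Bessel for the even family only accounts for it with weight $\tfrac{1}{2L}$; the surplus $\tfrac{1}{2L}|I_0|^2$ must be bounded separately, and your Cauchy--Schwarz fix $|I_0|^2\le \tfrac{2}{L}\int_{|x|>L}|xf|^2dx$ yields
\[
B(L)\;\le\;\Big(\tfrac{2\pi^2}{3}+1\Big)\tfrac{1}{L^2}\int_{\mathbb{R}\setminus[-L,L]}|xf(x)|^2dx,
\]
which is strictly weaker than the claim (and your other bookkeeping, bounding the full $\tfrac1L|I_0|^2$ directly, gives $+2$ instead of $+1$). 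Your alternative suggestion of ``Parseval on $\cL^2[0,L]$'' does not apply as stated: $\phi_e$ lives on all of $[-L,L]$ and has no symmetry, so there is no reason its coefficients against the system restricted to $[0,L]$ control the ones you need. This is not fatal for COS-admissibility, but the explicit constant $\tfrac{2}{3}\pi^2$ is what feeds into the formula \eqref{eq:LL} for the truncation range in Corollary \ref{cor:bounded v}, so the shortfall matters for the quantitative statement. The cleanest repair is the paper's opening move: bound $B(L)\le 4(S_L+\widetilde S_L)$ with the \emph{primed} sums over $k$, so that the $k=0$ mode enters with the weight $\tfrac{1}{2L}$ that matches the cosine-basis normalization; your periodization-plus-Bessel argument applied separately to each one-sided tail then recovers the constant $\tfrac{2}{3}\pi^2$ exactly.
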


\begin{proof} 
We have $B(L)\le 4 (S_L+\widetilde S_L)$ 
with
\begin{align*}
	S_L&:=\sideset{}{'}\sum_{k=0}^\infty \frac{1}{L}\left|\int_L^\infty f(x)\cos\left(k\pi\tfrac{x+L}{2L}\right)dx\right|^{2},\\
	\widetilde S_L&:=\sideset{}{'}\sum_{k=0}^\infty \frac{1}{L}\left|\int_{-\infty}^{-L} f(x)\cos\left(k\pi\tfrac{x+L}{2L}\right)dx\right|^{2},
\end{align*}
and it is sufficient to show $\lim_{L\to\infty} \widetilde S_L=\lim_{L\to\infty} S_L=0$. We will prove it for $S_L$ only, the proof for $\widetilde S_L$ being almost identical.

Let $j\in\mathbb{N}$. Using the classical cosine expansion of $f$ on the interval $[2jL-L,2jL+L]$ and Parseval's identity
we obtain
\begin{equation}
	\label{pars1}
	\begin{aligned}
		\int_{2jL-L}^{2jL+L} |f(x)|^2dx&=\sideset{}{'}\sum_{k=0}^\infty \frac{1}{L}\bigg|\int_{2jL-L}^{2jL+L} f(x)\underbrace{\cos\left(k\pi\tfrac{x-(2jL-L)}{2L}\right)}_{\equiv(-1)^{jk}\cos\left(k\pi\tfrac{x+L}{2L}\right)}dx\bigg|^{2}\\ 
		&=\sideset{}{'}\sum_{k=0}^\infty \frac{1}{L}\left|\int_{2jL-L}^{2jL+L} f(x)\cos\left(k\pi\tfrac{x+L}{2L}\right)dx\right|^{2}.
	\end{aligned}
\end{equation}
Further, using the Cauchy-Schwarz inequality, we estimate
\begin{align*}
	\Big|\int_L^\infty &f(x)\cos\left(k\pi\tfrac{x+L}{2L}\right)dx\Big|^{2}=
	\left|\sum_{j=1}^\infty \tfrac{1}{j}\cdot j\int_{2jL-L}^{2jL+L} f(x)\cos\left(k\pi\tfrac{x+L}{2L}\right)dx\right|^2\\
	&\le\Big( \underbrace{\sum_{j=1}^\infty \tfrac{1}{j^2}}_{=\pi^2/6}\Big) \sum_{j=1}^\infty j^2\left|\int_{2jL-L}^{2jL+L} f(x)\cos\left(k\pi\tfrac{x+L}{2L}\right)dx\right|^2,
\end{align*}
then
\begin{align*}
S_L&\le \dfrac{\pi^2}{6}\sideset{}{'}\sum_{k=0}^\infty \dfrac{1}{L} \sum_{j=1}^\infty j^2\left|\int_{2jL-L}^{2jL+L} f(x)\cos\left(k\pi\tfrac{x+L}{2L}\right)dx\right|^2\\
&=\dfrac{\pi^2}{6} \sum_{j=1}^\infty j^2 \sideset{}{'}\sum_{k=0}^\infty \dfrac{1}{L}\left|\int_{2jL-L}^{2jL+L} f(x)\cos\left(k\pi\tfrac{x+L}{2L}\right)dx\right|^2\\
&\stackrel{\eqref{pars1}}{=} \dfrac{\pi^2}{6} \sum_{j=1}^\infty j^2 \int_{2jL-L}^{2jL+L} |f(x)|^2dx.
\end{align*}
For $x\in [2jL-L,2jL+L]$ one has $j\le \frac{x}{L}$, hence,
\begin{gather*}
j^2 \int_{2jL-L}^{2jL+L} |f(x)|^2dx\le \dfrac{1}{L^2}\int_{2jL-L}^{2jL+L} | x f(x)|^2dx,\\
S_L\le \dfrac{\pi^2}{6} \sum_{j=1}^\infty \dfrac{1}{L^2}\int_{2jL-L}^{2jL+L} | x f(x)|^2dx
=
\dfrac{\pi^2}{6 L^2} \int_{L}^{\infty} | x f(x)|^2dx. 
\end{gather*}
Hence, the assumption \eqref{provf} implies $\lim_{L\to\infty}S_L=0$.
\end{proof}

\begin{cor}
\label{lem:f bounded_square_integrable}Let the density $f$ be bounded,
with finite first and second moments, then $f$ is COS-admissible.
\end{cor}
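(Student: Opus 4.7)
The plan is to reduce the corollary directly to Proposition~\ref{prop3}, which requires checking three hypotheses: $f\in\cL^1$, $f\in\cL^2$, and $\int_{\mathbb{R}}|xf(x)|^2\,dx<\infty$. The first is automatic since $f$ is a probability density, so $\int f=1$.

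For $f\in\cL^2$, I would use the boundedness hypothesis: if $f(x)\le C$ for some constant $C>0$, then $f(x)^2\le C\,f(x)$ pointwise, and integrating against $dx$ yields $\|f\|_2^2\le C<\infty$. The same trick handles the weighted integrability: writing $|xf(x)|^2=x^2 f(x)\cdot f(x)\le C\,x^2 f(x)$, and using that the second moment $\int_{\mathbb{R}}x^2 f(x)\,dx$ is finite by hypothesis, I obtain $\int_{\mathbb{R}}|xf(x)|^2\,dx\le C\int_{\mathbb{R}}x^2f(x)\,dx<\infty$.

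With all three hypotheses of Proposition~\ref{prop3} verified, the conclusion $B(L)\to 0$ as $L\to\infty$ follows immediately, so $f$ is COS-admissible. There is no real obstacle: the only mild step is recognizing that boundedness converts an $L^2$-weighted integral into an $L^1$-weighted one against $f$, which is where the moment hypotheses enter. The assumption that the first moment is finite is not actually needed for this argument beyond being implicit in the paper's standing convention that $f$ is centered.
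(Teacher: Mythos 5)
Your proof is correct and is exactly the intended argument: the paper states this corollary without proof as an immediate consequence of Proposition~\ref{prop3}, and your verification of its hypotheses (using $f^2\le Cf$ and $x^2f^2\le Cx^2f$ for a bounded density) is the standard route. Your side remark is also accurate: the finite first moment is redundant, since for a probability density the finite second moment already implies it.
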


Corollary \ref{lem:f bounded_square_integrable} already shows that the class of COS-admissible densities is very large. Next, we provide further sufficient conditions for COS-admissibility. In particular, we show that the densities of the stable distributions for stability parameter $\alpha \in(\frac12,2]$, including the Normal and the Cauchy distributions, and the density of the Pareto distribution are COS-admissible.

\begin{cor}\label{cor4}
	Let $f\in\cL^1$ such that its characteristic function $\varphi$ defined in \eqref{charf} has a
	weak derivative $\varphi'$ satisfying $|\varphi|^2+|\varphi'|^2\in\cL^1$.
	Then $f$ is COS-admissible.
\end{cor}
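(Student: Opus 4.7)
The strategy is to reduce to Proposition \ref{prop3}: since $f\in\cL^1$ is assumed, it remains to verify that $f\in\cL^2$ and $\int_\mathbb{R}|xf(x)|^2\,dx<\infty$, after which COS-admissibility follows immediately. Both of these properties will be read off on the Fourier side via Plancherel's theorem applied to $\varphi$ and to its weak derivative $\varphi'$ respectively.

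First, since $\varphi$ is by \eqref{charf} the Fourier transform of $f$ and $\varphi\in\cL^2$ by hypothesis, Plancherel gives $f\in\cL^2$ with $\|f\|_2^2=(2\pi)^{-1}\|\varphi\|_2^2$. For the second integrability condition, I would use that $\varphi,\varphi'\in\cL^2$ places $\varphi$ in the Sobolev space $H^1(\mathbb{R})$. Formally, differentiation under the integral in \eqref{charf} suggests $\varphi'(u)=\int_\mathbb{R} ixf(x)\,e^{iux}\,dx$, so that $\varphi'$ should be the Fourier transform of $ixf(x)$; the standard Fourier characterization of $H^1$ makes this identification rigorous at the level of tempered distributions. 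A second application of Plancherel then yields
\begin{equation*}
\int_\mathbb{R}|xf(x)|^2\,dx=\frac{1}{2\pi}\|\varphi'\|_2^2<\infty,
\end{equation*}
which is exactly the remaining hypothesis of Proposition \ref{prop3}.

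The only delicate point is matching the two notions of derivative, i.e.\ verifying that the weak derivative $\varphi'$ is indeed the (distributional) Fourier transform of $ixf(x)$. This is standard once one observes that both $\varphi$ and $\varphi'$ are tempered distributions lying in $\cL^2$: one may test against a Schwartz function $\psi$, use the duality $\langle\varphi',\psi\rangle=-\langle\varphi,\psi'\rangle$, and transport this identity across the Fourier transform, where $\psi'$ corresponds to multiplication by $-iu$. With this identification in place, Proposition \ref{prop3} yields COS-admissibility of $f$.
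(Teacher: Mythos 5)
Your proposal is correct and follows essentially the same route as the paper: identify $\varphi'$ as the Fourier transform of $x\mapsto ixf(x)$ in the sense of tempered distributions, apply Plancherel to get $f\in\cL^2$ and $\int_{\mathbb{R}}|xf(x)|^2dx<\infty$, and then invoke Proposition~\ref{prop3}. The extra detail you give on matching the weak derivative with the distributional Fourier transform is exactly what the paper delegates to a citation of H\"ormander.
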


\begin{proof} 
As known from the Fourier analysis of tempered distributions, e.g. \cite[Chap. VII]{horm},
	the function $\varphi'$
is the  Fourier transform of $x\mapsto ixf(x)$,
and due to Plancherel theorem we have
\[
\int_{\mathbb{R}} \big(|f(x)|^2+|xf(x)|^2\big)dx=\dfrac{1}{2\pi} \int_{\mathbb{R}}\big(|\varphi(u)|^2+|\varphi'(u)|^2 \big)du<\infty.
\]
Hence, the assumptions of Proposition~\ref{prop3} are satisfied.
\end{proof}

\begin{example} 
The densities of the stable distributions
whose characteristic functions are of the form
\begin{gather*}
\varphi(u)=\exp \big[i\mu u - |c u|^\alpha
\big (1-i\beta \Phi_\alpha(u) \mathop{\mathrm{sgn}} u\big )\big],\\
\Phi_\alpha(u)=\begin{cases}
	\tan\frac{\pi\alpha}{2},& \alpha\ne 1,\\
	-\frac{2}{\pi} \log|c u|, & \alpha=1,
	\end{cases}
\end{gather*}
for parameters $\alpha \in(\frac12,2]$, $\beta\in[-1,1]$, $c>0$ and $\mu \in\mathbb{R}$
are COS-admissible by Corollary \ref{cor4}.
Recall that this class includes the Normal and the Cauchy distributions.
\end{example}

\begin{example}
The density of the Pareto distribution with scale $\beta>0$ and shape
$\alpha>0$ can be described by $f(x)=\alpha\beta^{\alpha}x^{-(\alpha+1)}$,
for $x\geq\beta$, and is COS-admissible. To see this, let $B(L)$
as in Definition \ref{def:COS admissible}. It holds by integration by parts and using $\sum_{k=1}^{\infty}\frac{1}{k^{2}}<\infty$ and $\sin(k\pi)=0$, $k=1,2,...$,
that
\begin{align*}
B(L)= & \frac{1}{L}\left|\int_{L}^{\infty}f(x)dx\right|^{2}\\
 & +\sum_{k=1}^{\infty}\frac{1}{L}\left|-\frac{2L}{k\pi}\int_{L}^{\infty}f^{\prime}(x)\sin\left(k\pi\frac{x+L}{2L}\right)dx\right|^{2}\\
\leq & \frac{\beta^{2\alpha}}{L^{2\alpha+1}}+\frac{4\alpha^{2}\beta^{2\alpha}}{\pi^{2}L^{2\alpha+1}}\sum_{k=1}^{\infty}\frac{1}{k^{2}}\to0,\quad L\to\infty.
\end{align*}
\end{example}

Now we discuss the use of COS-admissible functions for the approximation of some integrals arising in mathematical finance. The next theorem shows that in particular a density with infinite support can be approximated by a cosine expansion.

\begin{thm}
\label{thm8}
Assume $f\in \cL^1\cap\cL^2$ to be COS-admissible, then
\[
\lim_{L\to\infty}\limsup_{N\to\infty} \Big\|f-\sideset{}{'}\sum_{k=0}^{N}c_{k}^Le_{k}^L\Big\|_2=0.
\]
\end{thm}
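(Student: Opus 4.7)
My plan is to insert two intermediate approximations between $f$ and the COS-sum and invoke the triangle inequality, mirroring the three-step chain in \eqref{eq:app4}. Write
\[
A_{N}^{L}:=\sideset{}{'}\sum_{k=0}^{N}a_{k}^{L}e_{k}^{L},\qquad P_{N}^{L}:=\sideset{}{'}\sum_{k=0}^{N}c_{k}^{L}e_{k}^{L},
\]
so that
\[
\bigl\|f-P_{N}^{L}\bigr\|_{2}\leq \underbrace{\|f-f_{L}\|_{2}}_{(\mathrm{I})}+\underbrace{\|f_{L}-A_{N}^{L}\|_{2}}_{(\mathrm{II})}+\underbrace{\|A_{N}^{L}-P_{N}^{L}\|_{2}}_{(\mathrm{III})}.
\]
The strategy is to show (I) and (III) tend to $0$ as $L\to\infty$ uniformly in $N$, while (II) tends to $0$ as $N\to\infty$ for each fixed $L$. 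Then taking $\limsup_{N\to\infty}$ kills (II), and afterwards sending $L\to\infty$ kills the rest.

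For (I), since $f\in\cL^{2}$, $\|f-f_{L}\|_{2}^{2}=\int_{|x|>L}|f(x)|^{2}dx\to 0$ by dominated convergence; this term is independent of $N$. For (II), note that $A_{N}^{L}$ is precisely the $N$-th partial sum of the classical cosine Fourier expansion of the restriction $f_{L}\in\cL^{2}([-L,L])$, and both $f_{L}$ and $A_{N}^{L}$ vanish outside $[-L,L]$, so the $\cL^{2}(\mathbb{R})$-norm equals the $\cL^{2}([-L,L])$-norm; then standard $\cL^{2}$-convergence of the cosine series gives $\|f_{L}-A_{N}^{L}\|_{2}\to 0$ as $N\to\infty$ for each fixed $L$.

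For (III), I would use orthogonality of the $e_{k}^{L}$ on $[-L,L]$: a direct substitution $y=(x+L)/(2L)$ gives $\|e_{k}^{L}\|_{2}^{2}=L$ for $k\geq 1$ and $\|e_{0}^{L}\|_{2}^{2}=2L$, so that
\[
\|A_{N}^{L}-P_{N}^{L}\|_{2}^{2}=L\sideset{}{'}\sum_{k=0}^{N}(a_{k}^{L}-c_{k}^{L})^{2}.
\]
The definition of $a_{k}^{L}$ and $c_{k}^{L}$ yields $a_{k}^{L}-c_{k}^{L}=-\frac{1}{L}\int_{\mathbb{R}\setminus[-L,L]}f(x)\cos\bigl(k\pi\tfrac{x+L}{2L}\bigr)dx$, so
\[
\|A_{N}^{L}-P_{N}^{L}\|_{2}^{2}\leq \sum_{k=0}^{\infty}\tfrac{1}{L}\Bigl|\int_{\mathbb{R}\setminus[-L,L]}f(x)\cos\bigl(k\pi\tfrac{x+L}{2L}\bigr)dx\Bigr|^{2}=B(L),
\]
which tends to $0$ as $L\to\infty$ by COS-admissibility, and crucially the bound is uniform in $N$.

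Combining: for every $L$, $\limsup_{N\to\infty}\|f-P_{N}^{L}\|_{2}\leq \|f-f_{L}\|_{2}+\sqrt{B(L)}$, and letting $L\to\infty$ proves the claim. The only delicate point in the argument is the order of limits: (II) is finite-$N$-in-$L$ while (III) must be controlled uniformly in $N$, which is exactly what the COS-admissibility assumption gives us. The rest is essentially bookkeeping with Parseval-type identities.
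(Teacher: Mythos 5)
Your proposal is correct and follows essentially the same route as the paper's proof: the identical three-term triangle-inequality decomposition via $f_L$ and the partial sum with classical coefficients $a_k^L$, with the third term equal to the paper's $\sideset{}{'}\sum_{k=0}^{N}\tilde c_k^L e_k^L$ (since $c_k^L-a_k^L=\tilde c_k^L$) and bounded uniformly in $N$ by $\sqrt{B(L)}$ via orthogonality. The only cosmetic difference is that you conclude by passing $\limsup_{N\to\infty}$ and then $L\to\infty$ directly, where the paper runs an explicit $\varepsilon/3$ argument.
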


\begin{proof}
Let $f_L$, $a_{k}^{L}$, $e_{k}^{L}$, $c_{k}^{L}$, $v_{k}^{M}$ as in Section \ref{sec:Review-COS}. Let $L>0$. Recall that $\langle e_{0}^L,e_{0}^L\rangle=2L$ and $\langle e_{k}^L,e_{l}^L\rangle=L \delta_{k,l}$ for $(k,l)\ne(0,0)$.
Consider the cosine coefficients of the tails of $f$, defined by
\[
\tilde{c}_{k}^L=\frac{1}{L}\int_{\mathbb{R}\setminus[-L,L]}f(x)\cos\left(k\pi\tfrac{x+L}{2L}\right)dx.
\]
Then it holds $c_{k}^L=a_{k}^L+\tilde{c}_{k}^L$, and it follows 
\[
\bigg\| f-\sideset{}{'}\sum_{k=0}^{N}c_{k}^Le_{k}^L\bigg\|_{2}
\leq
\underbrace{\big\Vert f-f_{L}\big\Vert_{2}}_{=:A_{1}(L)}
 +
 \underbrace{\Big\Vert f_{L}-\sideset{}{'}\sum_{k=0}^{N}a_{k}^Le_{k}^L\Big\Vert_{2}}_{=:A_{2}(L,N)}
 +
 \underbrace{\Big\Vert \sideset{}{'}\sum_{k=0}^{N} \tilde{c}_{k}^Le_{k}^L\Big\Vert_2}_{=:A_3(L,N)}.
\]
Due to $f\in\cL^2$ it holds $\lim_{L\to\infty}A_{1}(L)=0$. For each fixed $L$
one has $\lim_{N\to\infty}A_{2}(L,N)=0$ as $f$ is square integrable on each $[-L,L]$.
Further, 
\begin{align*}
A_3(L,N)^2&=\Big\langle \sideset{}{'}\sum_{k=0}^{N}\tilde{c}_{k}^Le_{k}^L,\sideset{}{'}\sum_{k=0}^{N}\tilde{c}_{k}^Le_{k}^L\Big\rangle=L\sideset{}{'}\sum_{k=0}^N |\Tilde c_k^L|^2\le L \sum_{k=0}^\infty |\Tilde c_k^L|^2\\
&= \sum_{k=0}^\infty \dfrac{1}{L} \bigg|\int_{\mathbb{R}\setminus[-L,L]}f(x)\cos\left(k\pi\tfrac{x+L}{2L}\right)dx\bigg|^{2}=B(L),
\end{align*}
and $\lim_{L\to\infty}B(L)=0$ as $f$ is COS-admissible.

Take any $\varepsilon>0$ and choose $L_0$ such that
$A_1(L)<\frac{\varepsilon}{3}$ and $B(L)<\big(\frac{\varepsilon}{3}\big)^2$ 
for all $L>L_0$, then
$A_3(L,N)<\frac{\varepsilon}{3}$ for all $L>L_0$ and all $N$.
For any $L>L_0$, choose $N_L$ sufficiently large  to have $A_2(L,N)<\frac{\varepsilon}{3}$
for all $N>N_L$, then
\[
\Big\Vert f-\sideset{}{'}\sum_{k=0}^{N}c_{k}^Le_{k}^L\Big\Vert _{2}<\varepsilon
\text{ for all $L>L_0$ and all $N>N_L$,}   
\]
which finishes the proof.\end{proof}

The next corollary shows that the integral \eqref{eq:integral} can be computed very efficiently if the
characteristic function of $f$ and the classical cosine coefficients of $v$ are
available in analytical form: this justifies the use of the COS method under some mild technical 
assumptions on the density $f$ and the function $v$. 
The most important example in the applications
for $v$ is the call option  $x\mapsto \max(e^{x}-K,0)$ and the put option $x\mapsto \max(K-e^{x},0)$  for some $K\geq0$. 
The coefficients \eqref{eq:coefficents_v} can be obtained analytically for call and put options. 

\begin{cor}\label{cor:Integral}
Let $f\in\cL^{1}\cap\cL^{2}$ be COS-admissible and
$v:\mathbb{R}\to\mathbb{R}$ be locally in $\cL^2$, that is,
\[
 \text{$v_{M}:=v1_{[-M,M]}\in\cL^2$ for any $M>0$.}
\]
Assume that $vf\in\cL^{1}$, then the integral of the
product of $f$ and $v$ can be approximated by a finite sum as follows.

Let $\varepsilon>0$ and let $M>0$ and $\xi>0$ be such that
\begin{equation}
\int_{\mathbb{R}\setminus[-M,M]}\big|v(x)f(x)\big|dx\leq\frac{\varepsilon}{2},\quad
\|v_M\|_2\le \xi.\label{eq:vf<eps/2}
\end{equation}
Let $L\geq M$ such that
\begin{equation}
\big\Vert f-f_{L}\big\Vert_{2}\leq\ \frac{\varepsilon}{6\xi}\label{eq:L1}
\end{equation}
and
\begin{equation}
B(L)\leq\left(\frac{\varepsilon}{6\xi}\right)^{2}.\label{eq:L2}
\end{equation}
Choose $N_{L}$ large enough so that
\[
\left\Vert f_{L}-\sum_{k=0}^{N}{}^{\prime}a_{k}^{L}e_{k}^{L}\right\Vert _{2}\leq\frac{\varepsilon}{6\xi},\quad N\geq N_{L}.
\]
Then it holds for all $N\geq N_{L}$
\begin{equation}
\left|\int_{\mathbb{R}}v(x)f(x)dx-\sum_{k=0}^{N}{}^{\prime}c_{k}^{L}v_{k}^{M}\right|\leq\varepsilon.
\end{equation}
\end{cor}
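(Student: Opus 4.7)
The plan is to split the target error into a tail contribution controlled by \eqref{eq:vf<eps/2} and a bulk contribution that I then estimate as a single $\cL^2$ inner product. Since $vf\in\cL^1$, the first half of \eqref{eq:vf<eps/2} immediately gives
\[
\bigg|\int_\mathbb{R} v(x)f(x)dx - \int_{-M}^{M} v(x)f(x)dx\bigg|\le \frac{\varepsilon}{2},
\]
so it remains to bound $\big|\int_{-M}^M v(x)f(x)dx - \sideset{}{'}\sum_{k=0}^N c_k^L v_k^M\big|$ by $\varepsilon/2$.

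The key observation is that, since $M\le L$, the basis function $e_k^L$ agrees with $\cos\left(k\pi\tfrac{x+L}{2L}\right)$ on $[-M,M]$, so that $v_k^M = \langle v_M, e_k^L\rangle$. Consequently, both quantities in the bulk term rewrite as $\cL^2$ pairings with $v_M$:
\[
\int_{-M}^{M} v(x)f(x)dx = \langle v_M, f\rangle,\qquad \sideset{}{'}\sum_{k=0}^{N} c_k^L v_k^M = \Big\langle v_M,\ \sideset{}{'}\sum_{k=0}^{N} c_k^L e_k^L\Big\rangle.
\]
Their difference is thus $\big\langle v_M,\, f - \sideset{}{'}\sum_{k=0}^N c_k^L e_k^L\big\rangle$, which by the Cauchy-Schwarz inequality is dominated by $\|v_M\|_2\cdot \big\|f - \sideset{}{'}\sum_{k=0}^N c_k^L e_k^L\big\|_2\le \xi\,\big\|f - \sideset{}{'}\sum_{k=0}^N c_k^L e_k^L\big\|_2$.

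Finally, the remaining $\cL^2$ norm is handled by recycling the three-term decomposition $A_1(L)+A_2(L,N)+A_3(L,N)$ from the proof of Theorem~\ref{thm8}: assumption \eqref{eq:L1} controls $A_1$, the choice of $N_L$ controls $A_2$, and \eqref{eq:L2} together with $A_3(L,N)^2\le B(L)$ controls $A_3$, making each of the three terms bounded by $\varepsilon/(6\xi)$. Their sum is at most $\varepsilon/(2\xi)$, and multiplication by $\xi$ produces $\varepsilon/2$ for the bulk error, which combined with the tail error yields the claimed bound $\varepsilon$. Nothing here is genuinely difficult; the only subtlety worth isolating is the identification $v_k^M=\langle v_M, e_k^L\rangle$, because it is exactly this reformulation that reduces the weighted error for a generic payoff $v$ to the pure $\cL^2$ approximation of $f$ already established in Theorem~\ref{thm8}.
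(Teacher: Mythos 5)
Your proposal is correct and follows essentially the same route as the paper's own proof: the same split into the tail term controlled by \eqref{eq:vf<eps/2} and the bulk term rewritten via $v_k^M=\langle v_M,e_k^L\rangle$, then Cauchy--Schwarz and the decomposition $A_1(L)+A_2(L,N)+\sqrt{B(L)}$ from Theorem~\ref{thm8}. No gaps; the identification $v_k^M=\langle v_M,e_k^L\rangle$ that you single out is indeed exactly the step the paper relies on.
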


\begin{proof}
Let $A_1(L)$ and $A_2(L,N)$ as in the proof of Theorem \ref{thm8}.
Due to $v_{k}^{M}=\langle v_{M},e_{k}^L\rangle$,  for all $N\ge N_L$ and by Theorem \ref{thm8} one obtains 
\begin{align*}
\Big|\int_{\mathbb{R}}v(x)&f(x)dx  -\sideset{}{'}\sum_{k=0}^{N}c_{k}^Lv_{k}^{M}\Big|\\
&=\Big|
\int_{\mathbb{R}\setminus[-M,M]}v(x)f(x) dx+
\langle v_M,f\rangle 
-\sideset{}{'}\sum_{k=0}^{N}c_{k}^L\langle v_{M},e_{k}^L\rangle 
\Big|\\
&\leq\Big|\int_{\mathbb{R}\setminus[-M,M]}v(x)f(x)dx\Big|+\Big|\Big\langle v_{M},f-\sideset{}{'}\sum_{k=0}^{N}c_{k}^Le_{k}^L\Big\rangle\Big|\\
 & \leq\int_{\mathbb{R}\setminus[-M,M]}|v(x)f(x)|dx+\| v_{M}\|_2\,\Big\Vert f-\sideset{}{'}\sum_{k=0}^{N}c_{k}^Le_{k}^L\Big\Vert_{2}\\
 & <\frac{\varepsilon}{2}+ \xi\left( A_1(L) + A_2(L,N) + \sqrt{B(L)}  \right)
 \\
  & \leq\frac{\varepsilon}{2}+ \xi\left( \frac{\varepsilon}{6\xi} + \frac{\varepsilon}{6\xi} + \frac{\varepsilon}{6\xi}  \right)=\varepsilon.
 \\
\end{align*}
\end{proof}

In the next corollary, we apply Corollary \ref{cor:Integral} to
bounded functions $v$ and, given some $\varepsilon>0$, obtain explicit formulae for $M$ and $L$ such that the Inequalities (\ref{eq:vf<eps/2}), (\ref{eq:L1}) and (\ref{eq:L2}) hold. For example, put options are bounded. To find
a lower bound for $M$, we need to estimate the \emph{tail sum} of
$f$. We apply Markov's inequality to estimate the tail sum using
the $n^{th}-$moment of $f$. 

If $f$ has semi-heavy tails, i.e. the
tails of $f$ decay exponentially, all moments of $f$ exist and
can be obtained by differentiating the characteristic function $n$
times. In a financial
context, log-returns are often modeled by semi-heavy tails or lighter
distributions: for instance the distribution of log-returns in the Heston model is between the exponential
and the Gaussian distribution, see \cite{dragulescu2002probability}. See \cite{schoutens2003levy} for an
overview of Lévy models with semi-heavy tails. In particular, the
Lévy process CGMY or the generalized hyperbolic processes, see \cite{albin2009asymptotic},
have semi-heavy tails.

We can find a suitable value of $L$ with the help of the bound for $B(L)$ in  Proposition \ref{prop3}.
For some Lévy models, the density $f$ is given in closed-form, see \cite{schoutens2003levy}, and
$B(L)$ can be estimated directly. 
In the next corollary, we propose to approximate the tails of 
$f$ by a density $\lambda$, which is given in closed-form.
We suggest to use the Laplace density, which decays exponentially just
like a density with semi-heavy tails. The (central) Laplace density
has only one free parameter describing the variance. One can use a
moment-matching method to calibrate the Laplace density $\lambda$,
setting the variances corresponding to $f$ and $\lambda$ equal.

The Laplace density with mean zero and variance $\sigma^{2}$ ($\sigma>0$) is given
by
\begin{equation}
\lambda_{\sigma}(x)=\frac{1}{\sqrt{2}\sigma}e^{-\sqrt{2}\frac{|x|}{\sigma}},\quad x\in\mathbb{R}.\label{eq:Laplace density}
\end{equation}

\begin{cor}[COS method, Markov range]
\label{cor:bounded v}
Let both the density $f$ and the function $v$ be bounded, with $|v(x)|\le K$ for all $x\in\mathbb{R}$ and some $K>0$.
For some even natural number $n\geq2$ assume the $n^{th}-$moment of $f$ exists and denote it by $\mu_{n}$, i.e.
\begin{equation}
\mu_{n}=\int_\mathbb{R}{x^nf(x)dx}<\infty.
\end{equation}
Let $\varepsilon>0$ and
\begin{equation}
M=\sqrt[n]{\frac{2K\mu_{n}}{\varepsilon}}.\label{eq:M}
\end{equation}
Assume there is some $\sigma>0$ such that 
\begin{equation}
f(x)\leq\lambda_{\sigma}(x),\quad|x|\geq M.\label{eq:upper_bound_f}
\end{equation}
Set
\begin{align}
L= & M\lor\left(-\frac{\sigma}{2\sqrt{2}}\log\left(\frac{\sqrt{2}\sigma\varepsilon^{2}}{72MK^{2}}\frac{12}{\pi^{2}}\left(\frac{\sigma^{2}}{M^{2}}+\frac{2\sqrt{2}\sigma}{M}+4\right)^{-1}\right)\right)\nonumber \\
 & \lor-\frac{\sigma}{2\sqrt{2}}\log\left(\frac{2\sqrt{2}\sigma\varepsilon^{2}}{72MK^{2}}\right).\label{eq:LL}
\end{align}
Choose $N_{L}$ large enough. Then it holds for all $N\geq N_{L}$, that
\begin{equation}
\left|\int_{\mathbb{R}}v(x)f(x)dx-\sum_{k=0}^{N}{}^{\prime}c_{k}^{L}v_{k}^{M}\right|\leq\varepsilon.\label{eq:vf-sum ck vk}
\end{equation}
\end{cor}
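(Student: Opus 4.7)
The plan is to apply Corollary~\ref{cor:Integral} after fixing a convenient $\xi$ and then verifying its three structural hypotheses \eqref{eq:vf<eps/2}, \eqref{eq:L1}, \eqref{eq:L2} with explicit constants. Since $|v|\le K$ pointwise, the trivial bound $\|v_{M}\|_2^{2}\le 2MK^{2}$ suggests the choice $\xi:=K\sqrt{2M}$, in which case the target bound becomes $(\varepsilon/(6\xi))^{2}=\varepsilon^{2}/(72MK^{2})$; the formula \eqref{eq:LL} for $L$ is calibrated exactly against this quantity, which is why it appears three times in the logarithms.

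For \eqref{eq:vf<eps/2} I would combine $|v|\le K$ pointwise with Markov's inequality $\int_{|x|>M} f(x)\,dx\le \mu_{n}/M^{n}$ to obtain
\[
\int_{\mathbb{R}\setminus[-M,M]}|v(x)f(x)|\,dx\le K\mu_{n}/M^{n},
\]
which equals $\varepsilon/2$ precisely when $M^{n}=2K\mu_{n}/\varepsilon$, giving formula \eqref{eq:M}.

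For the tail estimates \eqref{eq:L1} and \eqref{eq:L2}, the key observation is that $L\ge M$ forces $f\le\lambda_{\sigma}$ on $\{|x|>L\}$, so $f$ may be replaced by $\lambda_{\sigma}$ in every tail integral. A direct antiderivative gives
\[
\|f-f_{L}\|_{2}^{2}\le\int_{|x|>L}\lambda_{\sigma}(x)^{2}\,dx=\tfrac{1}{2\sqrt{2}\,\sigma}\,e^{-2\sqrt{2}L/\sigma},
\]
and imposing this $\le \varepsilon^{2}/(72MK^{2})$ yields the third entry of the maximum in \eqref{eq:LL}. For \eqref{eq:L2}, Proposition~\ref{prop3} reduces the question to $\int_{|x|>L}x^{2}\lambda_{\sigma}(x)^{2}\,dx$; two integrations by parts yield a polynomial in $L$ and $\sigma$ times $e^{-2\sqrt{2}L/\sigma}$, and then using $L\ge M$ to replace $1/L$ and $1/L^{2}$ by $1/M$ and $1/M^{2}$ in the resulting prefactor gives
\[
B(L)\le \frac{\pi^{2}}{12\sqrt{2}\,\sigma}\,e^{-2\sqrt{2}L/\sigma}\left(\frac{\sigma^{2}}{M^{2}}+\frac{2\sqrt{2}\,\sigma}{M}+4\right),
\]
and requiring $B(L)\le\varepsilon^{2}/(72MK^{2})$ solves as the second entry of the maximum in \eqref{eq:LL}.

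The only mildly technical step is precisely this last rearrangement: without the monotonicity replacement $L\mapsto M$ inside the polynomial prefactor, the bound on $L$ would only be implicit, whereas with it $L$ can be read off as an explicit logarithm. With $L$ taken as the maximum in \eqref{eq:LL} and $\xi=K\sqrt{2M}$, all three hypotheses of Corollary~\ref{cor:Integral} are satisfied, so its conclusion is exactly the claimed estimate \eqref{eq:vf-sum ck vk}.
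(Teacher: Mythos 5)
Your proposal is correct and follows essentially the same route as the paper: Markov's inequality yields \eqref{eq:vf<eps/2} with $M$ as in \eqref{eq:M}, the Laplace tail bound with $\xi^2\le 2MK^2$ yields \eqref{eq:L1}, and Proposition~\ref{prop3} together with the explicit integral $\int_L^\infty x^2\lambda_\sigma^2(x)\,dx$ and the replacement of $L$ by $M$ in the polynomial prefactor yields \eqref{eq:L2}, after which Corollary~\ref{cor:Integral} gives the conclusion. The computations and constants all match the paper's proof.
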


\begin{proof}
It holds by Markov's inequality and the definition of $M$
\[
\int_{\mathbb{R}\setminus[-M,M]}\left|v(x)f(x)\right|dx\leq K\int_{\mathbb{R}\setminus[-M,M]}f(x)dx\leq K\frac{\mu_{n}}{M^{n}}\leq\frac{\varepsilon}{2}.
\]
Hence Equation (\ref{eq:vf<eps/2}) is satisfied. Next we show Inequality
(\ref{eq:L1}) holds. Let $\xi=\|v_M\|_2$ as in Corollary \ref{cor:Integral}.
As $v$ is bounded by $K$ it holds $\xi^{2}\leq2MK^{2}$. Using the
upper bound (\ref{eq:upper_bound_f}) and the expressions (\ref{eq:Laplace density})
it follows 
\begin{align*}
\left\Vert f-f_{L}\right\Vert _{2}^{2} & =\int_{\mathbb{R}\setminus[-L,L]}f^{2}(x)dx\\
 & \leq\frac{1}{2\sigma^2}\int_{\mathbb{R}\setminus[-L,L]}e^{-2\sqrt{2}\frac{|x|}{\sigma}}dx\\
 & =\frac{1}{2\sqrt{2}\sigma}e^{-2\sqrt{2}\frac{L}{\sigma}}\leq\frac{\varepsilon^{2}}{72MK^{2}}\leq\left(\frac{\varepsilon}{6\xi}\right)^{2}.
\end{align*}
The second last inequality follows by the definition of $L$ and the last inequality holds as $\xi^{2}\leq2MK^{2}$. Hence Inequality
(\ref{eq:L1}) is satisfied. To show that also Inequality (\ref{eq:L2})
is respected, we use Proposition \ref{prop3}. It holds

\begin{align*}
B(L) & \leq\frac{2}{3}\frac{\pi^{2}}{L^{2}}\int_{\mathbb{R}\setminus[-L,L]}|xf(x)|^{2}dx\\
 & \leq\frac{4}{3}\frac{\pi^{2}}{L^{2}}\int_{L}^{\infty}x^{2}\lambda_{\sigma}^{2}(x)dx\\
 & =\frac{\pi^{2}}{12\sqrt{2}\sigma}\left(\frac{\sigma^{2}}{L^{2}}+\frac{2\sqrt{2}\sigma}{L}+4\right)e^{-2\sqrt{2}\frac{L}{\sigma}}\\
 & \leq\frac{\pi^{2}}{12\sqrt{2}\sigma}\left(\frac{\sigma^{2}}{M^{2}}+\frac{2\sqrt{2}\sigma}{M}+4\right)e^{-2\sqrt{2}\frac{L}{\sigma}}\\
 & \leq\frac{\varepsilon^{2}}{72MK^{2}}\leq\left(\frac{\varepsilon}{6\xi}\right)^{2},
\end{align*}
by the definition of $L$. By Corollary \ref{cor:Integral}, Inequality
(\ref{eq:vf-sum ck vk}) is obtained.
\end{proof}

\begin{rem}
The original proof of the convergence of the COS method given in \cite{fang2009novel} is somewhat more restrictive compared to the results stated in Corollary \ref{cor:Integral} because it is assumed that the sum over the cosine coefficients of the payoff function is finite, i.e.,
\begin{equation}
\sum_{k=N}^{\infty}\left|v_{k}^{M}\right|<\infty,\label{harmonicSum}
\end{equation}
see Lemma 4.1 in \cite{fang2009novel}. In particular, the cosine coefficients of a put option decay as $\frac{1}{k}$, see Appendix \ref{sec:Interval-cumulants}, and do not satisfy Equation (\ref{harmonicSum}). On the other hand, according to Corollary \ref{cor:bounded v}, put options can be priced very well by the COS method.
\end{rem}

\begin{rem}
For small $\varepsilon$, the parameter $M$ is of order $\varepsilon^{-1/n}$, while the other terms in \eqref{eq:LL} are of order $\log\varepsilon$. Hence, $L=M$ for $\varepsilon$ small enough, i.e. the formula for $L$
simplifies considerably. Indeed, for the numerical experiments reported
in Table \ref{tab:Advanced-stock-price} we observed $L=M$. 
\end{rem}

\begin{rem}
The smoother $f$, the smaller $N_{L}$ may be chosen to obtain a
good approximation, see \cite{fang2009novel} and references therein.
\end{rem}

\begin{rem}
The COS method has also been applied to price exotic options like Bermudan, American or discretely monitored barrier and Asian options, see 
\cite{fang2009pricing,fang2011fourier,zhang2013efficient}. One of the central idea when pricing these exotic options is the approximation of the density of the log returns by a cosine expansion as in Theorem \ref{thm8}. But some care is necessary when applying Corollary \ref{cor:Integral} to obtain a truncation range, because the truncation range also depends on the payoff function itself. 

The COS method has been extended to the multidimensional case, see \cite{ruijter2012two}, using a heuristic truncation range similar to Equation (\ref{eq:cumulants}). In a future research, we would like to generalize our results, in particular generalize Definition \ref{def:COS admissible}, Theorem \ref{thm8} and Corollary \ref{cor:bounded v}, and derive a truncation range in a multidimensional setting.
\end{rem}

\section{Applications}\label{sec:Applications}

In this section, we apply Corollary \ref{cor:bounded v} to some stock price models.
We use the following setting: let $(\Omega,\mathcal{F},Q)$ be a probability
space. $Q$ is a risk-neutral measure. Let $S_{T}$ be a positive random
variable describing the price of the stock at time $T>0$. The price of
the stock today is denoted by $S_{0}$. We assume there is a bank
account paying continuous compounded interest $r\in\mathbb{R}$. Let
\begin{equation}
X:=\log(S_{T})-E[\log(S_{T})]\label{eq:X}
\end{equation}
be the centralized log-returns. $E[\log(S_{T})]$ is the expectation of the log-returns under the risk-neutral measure and can be obtained from the characteristic function of $\log(S_{T})$. Because the density of $X$ is centered around zero, it is justified to approximate the density of $X$  by a \emph{symmetric} interval $[-L,L]$. In \cite{fang2009novel} a slightly different centralization of the log-returns has been considered. The characteristic function of $X$ is denoted by $\varphi_X$ and the density of $X$ is denoted by $f_X$.
In this setting, the price of a put option is given by
\begin{equation}
e^{-rT}E[(K-S_{T})^{+}]=e^{-rT}\int_{\mathbb{R}}v(x)f_{X}(x)dx,\label{eq:put}
\end{equation}
where
\begin{equation}
v(x)=\left(K-e^{x+E[\log(S_{T})]}\right)^{+},\quad x\in\mathbb{R}.\label{v}
\end{equation}
We approximate the integral at the right-hand-side of Equation (\ref{eq:put})
by the COS method. The cosine coefficients $v_{k}^{M}$ of $v$ can be
obtained in explicit form, see Appendix \ref{sec:Interval-cumulants}
or \cite{fang2009novel}. The price of a call option is given by the put-call parity. 

There are two formulae to choose the
truncation range $[-L,L]$ for the COS method: the \emph{cumulants
range}, based on Equation (\ref{eq:cumulants}) and the \emph{Markov range},
based on Corollary \ref{cor:bounded v}. To obtain the former, one needs to compute
the first, second and forth cumulants. Regarding the second, one need
to compute the $n^{th}-$moment. We will see that $n=4$, $n=6$ or
$n=8$ represent a reasonable choice. The $n^{th}-$moment and the $n^{th}-$cumulant
are similar concepts and can be obtained from the $n^{th}-$derivative
of the characteristic function. In Sections \ref{subsec:Black-Scholes-and-Laplace} till \ref{subsec:calibration}
we compare both formulae under the following aspects:
\begin{itemize}
\item How does the choice of the truncation range affect the number
of terms $N$? Certainly the larger the range, the larger we have
to set $N$ to obtain a certain precision.
\item Are there (important) examples where the COS method fails using the
Markov or the cumulants truncation range?
\item How can we avoid to evaluate the $n^{th}-$derivative of the characteristic
function to compute the $n^{th}-$moment or cumulant of the log-returns,
each time we apply the COS method, for performance optimization? 
\end{itemize}
In addition, we provide insights which moment to use for the Markov
range and we will apply the Markov range to all models discussed
in \cite{fang2009novel}. All numerical experiments are carried out on a modern
laptop (Intel i7-10750H) using the software R and vectorized code without parallelization.

\subsection{\label{subsec:Black-Scholes-and-Laplace}Black-Scholes and Laplace
model}

In the Black-Scholes model, see \cite{black1973th}, $X$ is normally
distributed. In the Laplace model, see \cite{madan2016adapted}, $X$
is Laplace distributed, see Equation (\ref{eq:Laplace density}) for the density of the Laplace
distribution. Both distributions are simple enough such that the quantile
functions are given explicitly. There are also closed-form solutions
for the prices of call and put options under both models.

Figure \ref{fig:Markov} compares $M$, defined
by the $n^{th}-$moment of $X$, see Equation (\ref{eq:M}), to the quantiles
of $X$. The higher $n$, the
sharper Markov's inequality. At least for the Normal and Laplace
distributions, good values for $M$ are obtained for $n\geq6$. Using
higher than the $8^{th}-$moment only marginally improves $M$. If not stated otherwise,
we will therefore use the $8^{th}-$moment to obtain the Markov truncation range for the COS method.

\begin{figure}[!h]
\centering
\includegraphics[height=7cm,width=7cm]{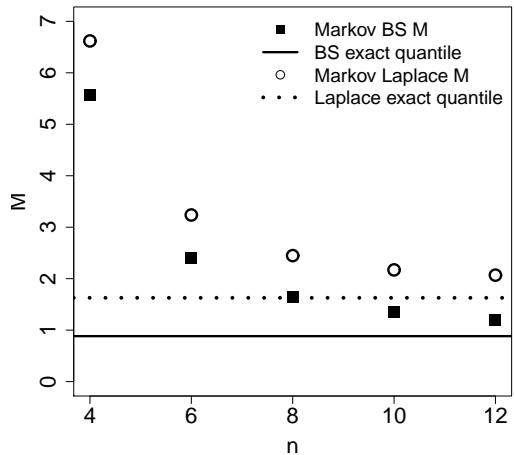}\\
\caption{\label{fig:Markov}Moments for Markov range. The variable $M$ to built the truncation range, see Equation (\ref{eq:M}), is shown for different moments and $K=1$ and $\varepsilon=10^{-5}$ for a Normal (BS) and Laplace distribution with mean zero and standard deviation $0.2$, respectively. The solid and dotted lines correspond to the quantiles, i.e. the minimal value for $M$ such that Equation (\ref{eq:vf<eps/2}) is satisfied (setting $v\equiv1$ in Equation (\ref{eq:vf<eps/2})).
}
\end{figure}

Consider an at-the-money call option on a stock with price $S_{0}=100$
today and with one year left to maturity. Assume the interest rates
are zero. The left panel of Figure \ref{fig:BS_Laplace} displays
$L$ over the error tolerance $\varepsilon$ for the Markov and the cumulants range.
The figure also shows the minimal value for $N$
such that the absolute difference of the true price of the option
and the approximation of the price using the COS method is below the
error tolerance $\varepsilon$. The the computational time to compute the option price by the COS method using $N$ steps is also indicated.

For a reasonable error tolerance,
e.g. $\varepsilon\in\left(10^{-3},10^{-8}\right)$, the minimal value
of $N$ to ensure the COS method is close enough to its reference
price is at most twice as large using the Markov range instead
of the cumulants range.

So using the Markov range based on the  $8^{th}-$moment instead of the well-established cumulants range,
at most doubles the computational time of the COS method for the same level
of accuracy. We see a similar pattern for advanced stock price models, see Table \ref{tab:Advanced-stock-price}.

The right panel of Figure \ref{fig:BS_Laplace} shows the effect of size of the truncation range, which is between one and a hundred times the volatility $\sigma=0.2$, on the computational time of the COS method. Setting $\varepsilon=10^{-3}$, we see a linear relationship between the size of the truncation range and the minimal value for $N$ to ensure convergence of the COS method for the Laplace model. The computational time is directly related to $N$. Setting the truncation range too large, increases the computational time unnecessarily. Thus Corollary \ref{cor:bounded v} may help to save computational time as well. For example, using the Markov range instead of $[-100\sigma,100\sigma]$, where $\sigma$ is the volatility of the log-returns, reduces the computational time by more than a factor two.

\begin{figure}[!h]
\centering
\begin{tabular}{cc}
\includegraphics[height=6.5cm,width=6cm]{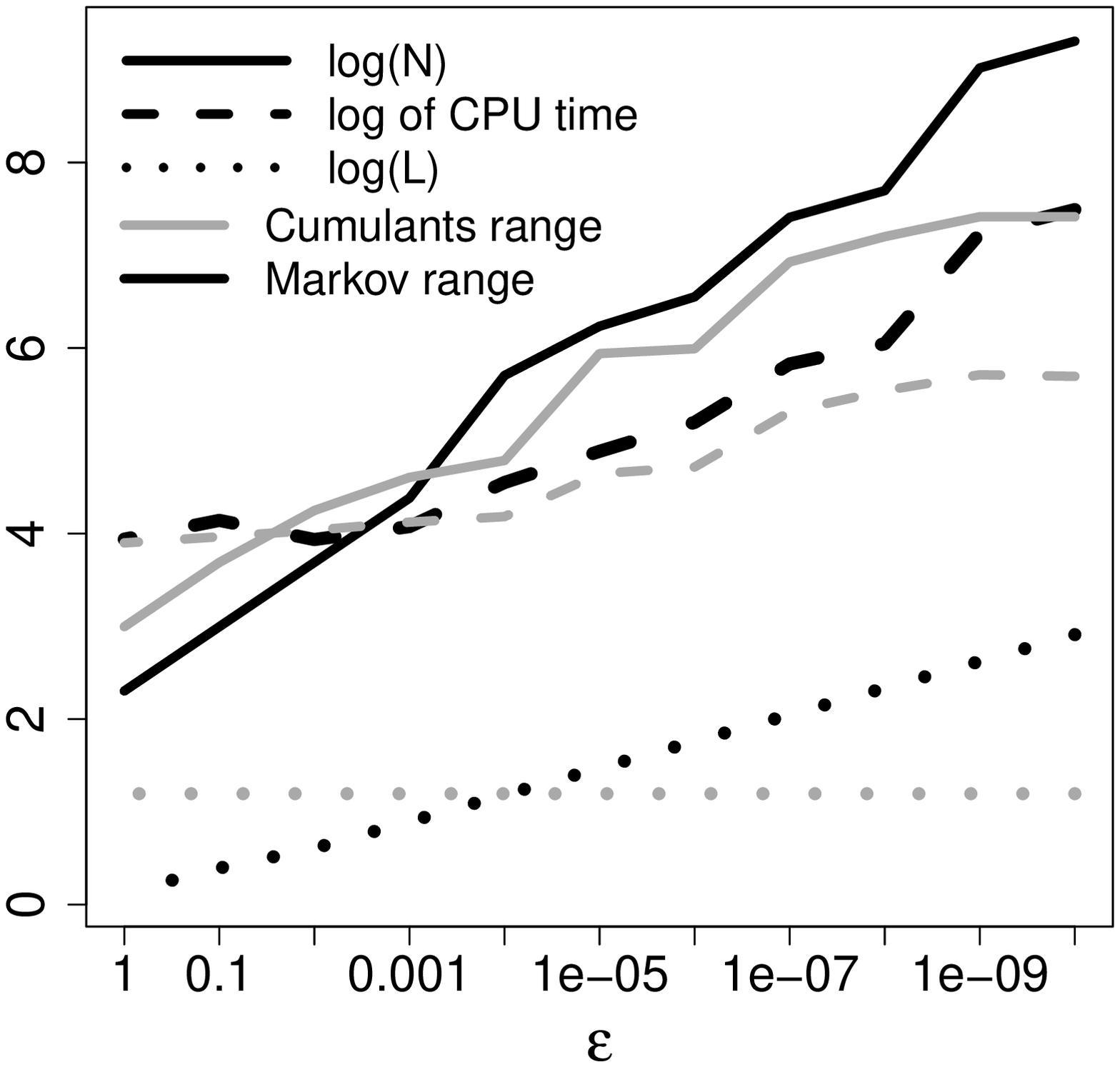}& \includegraphics[height=6.5cm,width=6cm]{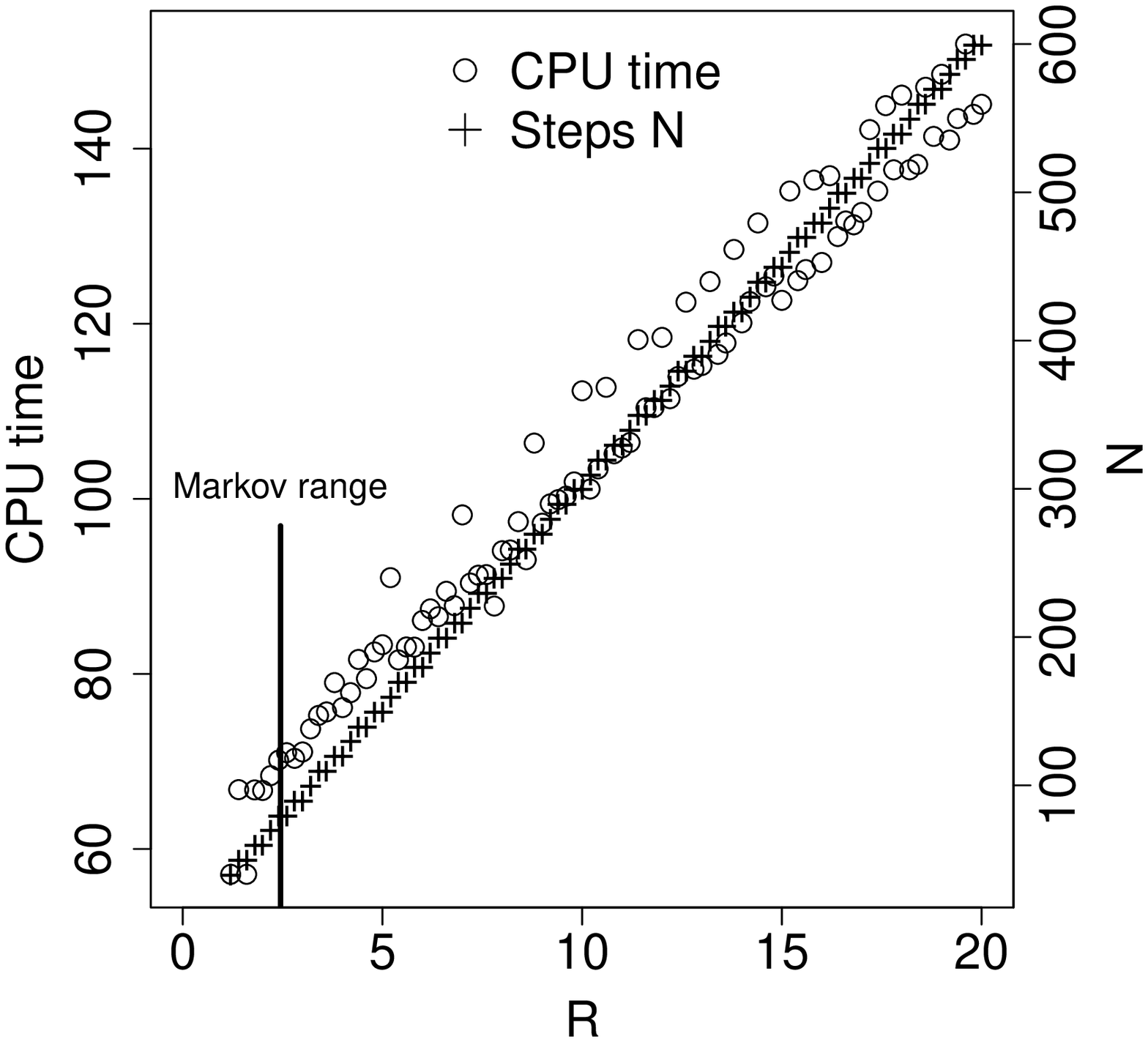}\\
Panel A&Panel B
\end{tabular}

\caption{\label{fig:BS_Laplace}
Convergence of the COS method for a call option in the Laplace model with volatility $0.2$. Panel A: The variable $L$ to built the truncation range is shown for the Markov range and the cumulants range over different $\varepsilon$. The minimal number of steps $N$ to obtain convergence of the COS method and the computational time (in microseconds) is illustrated as well. For the Markov range, the $8^{th}-$moment is used to obtain $L$, see Equation (\ref{eq:LL}). For the cumulants range four cumulants are used, see Equation (\ref{eq:cumulants}). Panel B: The minimal number of steps $N$ to obtain convergence of the COS method and the computational time (in microseconds) is illustrated for different truncation ranges $[-R,R]$ for $\varepsilon=10^{-3}$. }
\end{figure}

\subsection{\label{subsec:Advanced-models}Advanced models}

Fang and Oosterlee, see \cite{fang2009novel}, applied the COS method
with the cumulants range to three advanced stock price models with
different parameters, namely the Heston model, see \cite{heston1993closed},
the Variance Gamma model (VG), see \cite{madan1998variance}, and the
CGMY model, see \cite{carr2002fine}. We repeat the empirical study
using the Markov range instead. 

Table \ref{tab:Advanced-stock-price} shows the minimal value of $N$
to ensure the approximation of the price of an option by the COS method 
is close enough to its reference price, which
is taken from \cite{fang2009novel}, for both truncation ranges. For those
models, we conclude that using the Markov range based on the $8^{th}$
moment and an error tolerance of $\varepsilon=10^{-7}$ instead of
the cumulants range, increases $L$ by about the factor $2.5$
and $N$ by about the factor $2.2$. (Using the $4^{th}$ moment for
the Markov range and an error tolerance of $\varepsilon=10^{-4}$
increases $L$ and $N$ by about the factor four). The terms $N$ directly determine
the computational time of the COS method.

\begin{table}[h!]
	\begin{center}
		\caption{\label{tab:Advanced-stock-price}Advanced stock price models. Parameters
			for the first fourteen models are as in \cite{fang2009novel}. The parameters of the models M1, M2, M3 and M4 are specified in Section \ref{subsec:Counterexamples}. $\varepsilon$ describes the error tolerance and the columns $n_c$ and $n_M$,  describe the number of cumulants, respectively the number of moments, used to determine the truncation range. The columns $L_c$, $L_M$, $N_c$, $N_M$, 
			$t_c$, $t_M$ describe the truncation range, the minimal value of steps $N$ to ensure convergence of the COS method, and the computational time in microseconds, respectively for the cumulants range and the Markov range.}
		\begin{tabular}{llrrrrrrrrr}
			\parbox[t]{12mm}{\textbf{Model}} & \parbox[t]{12mm}{\textbf{Para-\\meters}} 
& \parbox[t]{6mm}{\centering \textbf{$\varepsilon$}} & \parbox[t]{6mm}{\centering \textbf{$n_c$}} & \parbox[t]{6mm}{\centering \textbf{$n_M$}}
			& \parbox[t]{6mm}{\centering \textbf{$L_c$}} & \parbox[t]{6mm}{\centering \textbf{$L_M$}}& \parbox[t]{6mm}{ \centering \textbf{$N_c$}}& \parbox[t]{6mm}{ \centering \textbf{$N_M$}} & \parbox[t]{6mm}{ \centering \textbf{$t_c$}}& \parbox[t]{6mm}{ \centering
\textbf{$t_M$}}\\[1.5\bigskipamount]
			\hline
            Heston & $T=1$ & $10^{-7}$ & 4 & 8 & 3.4 & 9.4 & 220 & 580 & 216 & 387\\
Heston & $T=1$ & $10^{-4}$ & 4 & 4 & 3.4 & 12 & 120 & 420 & 171 & 318\\
Heston & $T=10$ & $10^{-7}$ & 4 & 8 & 11.1 & 28 & 130 & 280 & 187 & 319\\
Heston & $T=10$ & $10^{-4}$ & 4 & 4 & 11.1 & 39.6 & 80 & 260 & 154 & 269\\
VG & $T=0.1$ & $10^{-7}$ & 4 & 8 & 0.8 & 2.3 & 630 & 950 & 242 & 326\\
VG & $T=0.1$ & $10^{-4}$ & 4 & 4 & 0.8 & 2.9 & 80 & 360 & 89 & 167\\
VG & $T=1$ & $10^{-7}$ & 4 & 8 & 1.9 & 4.3 & 100 & 190 & 80 & 103\\
VG & $T=1$ & $10^{-4}$ & 4 & 4 & 1.9 & 6.9 & 40 & 150 & 68 & 88\\
CMGY & $Y=0.5$ & $10^{-7}$ & 4 & 8 & 5.6 & 12.5 & 110 & 230 & 113 & 153\\
CMGY & $Y=0.5$ & $10^{-4}$ & 4 & 4 & 5.6 & 21.1 & 60 & 220 & 87 & 136\\
CGMY & $Y=1.5$ & $10^{-7}$ & 4 & 8 & 13.4 & 32.9 & 40 & 100 & 89 & 100\\
CGMY & $Y=1.5$ & $10^{-4}$ & 4 & 4 & 13.4 & 62.4 & 30 & 130 & 75 & 108\\
CGMY & $Y=1.98$ & $10^{-7}$ & 4 & 8 & 98 & 254.6 & 40 & 100 & 81 & 102\\
CGMY & $Y=1.98$ & $10^{-4}$ & 4 & 4 & 98 & 484.3 & 30 & 140 & 71 & 108\\
MJD & M1 & $10^{-7}$ & 4 & 8 & 0.9 & 4 & $-$ & 390 & $-$ & 164\\
MJD & M1 & $10^{-7}$ & 6 & 8 & 2.8 & 4 & 270 & 390 & 128 & 164\\
MJD & M2 & $10^{-8}$ & 6 & 8 & 5.8 & 18.2 & $-$ & 5750 & $-$ & 1444\\
CGMY & M3 & $10^{-7}$ & 4 & 8 & 1.5 & 9 & $-$ & 1990 & $-$ & 658\\
Heston & M4  & $10^{-2}$ & 2 & 8 & 1.3 & 3.7 & $-$ & 190 & $-$ & 211\\

		\end{tabular}
	\end{center}
\end{table}

\subsection{\label{subsec:Counterexamples}Examples where COS method
diverges}

In the Merton jump diffusion model (MJD), see \cite{merton1976option},
which is a generalization of the Black-Scholes model,
the stock price is modeled by a jump-diffusion process: the number of jumps are modeled by a Poisson process with intensity
$\eta>0$, i.e. the expected number of jumps in the time interval
$[0,T]$ is $\eta T$. The instantaneous variance of the returns,
conditional on no arrivals of jumps, is given by $\sigma^{2}>0$.
The jumps are log-normal distributed. The expected percentage jump-size
is described by $\kappa\in(-1,\infty)$. The variance of the logarithm
of the jumps is described by $\delta^{2}>0$. 

For model M1 we choose
\[
T=0.1,\quad\sigma=0.1,\quad\eta=0.001,\quad\kappa=-0.5,\quad\delta=0.2
\]
and for model M2, we set

\[
T=0.01,\quad\sigma=0.1,\quad\eta=0.00001,\quad\kappa=e^{-6.98}-1\approx-0.999,\quad\delta=0.2.
\]
For both models we set $S_{0}=100$ and $r=0$ and analyze a call option with strike $K=100$. 

Under the Merton jump diffusion model
the characteristic function and the density of the log-returns and
pricing formulae for put and call option are given in closed-form in terms of an infinite series. We use the first one hundred terms of that series to obtain reference prices for model M1 and M2 and we also apply the Carr-Madan formula, see \cite{carr1999option}, to confirm the reference prices. 

The left panel of Figure \ref{fig:Jump-diffusion} shows the reference price of a call option under model M1 and the prices using
the COS method with the truncation range $[-L,L]$ based on cumulants ($L=0.85$ using four cumulants)
and Markov's inequality ($L=3.99$ using the $8^{th}-$moment). An application of Corollary \ref{cor:bounded v} provides
a satisfactory result. However, we clearly see the approximation of the
price by the COS method does not converge properly using the cumulants
range. The relative error is about two basis points (BPS), a significant
difference, independent how large we choose $N$. The cumulants range
is too short and does not fully capture the second mode (the jump)
of the MJD density, as shown by the right panel of Figure \ref{fig:Jump-diffusion}.

Next we test the cumulants range with six cumulants for the MJD model. The truncation range using six cumulants for model M1 is large enough to ensure the convergence of the COS model, see Table \ref{tab:Advanced-stock-price}. But the cumulants range with six cumulants applied to model M2 is $[-5.8,5.8]$ and is \emph{not} large enough to ensure convergence within the required precision ($\varepsilon=10^{-8}$). Why? If a jump occurs, the expected jump size of the log-returns is equal to 
\[
\log(\kappa+1)-\frac{1}{2}\delta^{2}=-7,
\]
which is not inside the interval $[-5.8,5.8]$. Hence, again, the second mode of the density, i.e. the jump, is not fully captured by the truncation interval based on six cumulants. We report the prices for model M2 using different numeric approximations
\begin{align*}
\pi_{\text{MJD}} & =0.3989455935507185\\
\pi_{\text{Carr-Madan}} & =0.3989455935506932\\
\pi_{\text{Markov}} & =0.3989455935506925\\
\pi_{\text{Cumulants}} & =0.3989454898987361
\end{align*}

The price $\pi_{\text{MJD}}$ is obtained by the closed-form solution of the call price in terms of
an infinite series using the first one hundred terms. $\pi_{\text{Carr-Madan}}$
is obtained by the Carr-Madan formula where the damping-factor is
set to $\alpha=0.1$, we use $N=2^{17}$ points and the truncated
Fourier domain is set to $[0,1200]$. $\pi_{\text{Markov}}$ is obtained
by the COS method with $N=10^{6}$ terms and using the $8^{th}-$moment
and $\varepsilon=10^{-8}$ to obtain the truncation range. $\pi_{\text{Cumulants}}$ is obtained
by the COS method also with $N=10^{6}$ terms and using six cumulants
to obtain the truncation range.

Model M3 is a CGMY model with parameters $C=0.005$ and $G=M=Y=1.5$. Consider
an at-the-money call option on a stock with price $S_{0}=100$ today
and with $0.1$ years left to maturity. Assume the interest rates
are zero. Using the cumulative range with four cumulants, the relative error of the
approximation by the COS method and the reference price is about one
basis point, a significant difference, see Table \ref{tab:.-CGMY},
and does not improve when increasing $N$.

Model M4 is the Heston model with the following parameters: speed of mean-reversion
$\kappa=1$, level of mean-reversion $\eta=0.05$, vol of vol $\theta=2$,
initial vol $v_{0}=0.01$ and correlation $\rho=-0.75$. Consider
an at-the-money call option on a stock with price $S_{0}=100$ today
and with $0.5$ years left to maturity. Assume the interest rates
are zero. Set $\varepsilon=10^{-2}$. 

Using the cumulative range based only on the second cumulant, the truncation range is $[-1.33,1.33]$
and the price of the option by the COS method is $1.709$. 

On the other hand, $1.738$ is the price of the option based on the Markov
range, which is $[-3.71,3.71]$. We used $N=1000$. Using a larger
$N$ does not change the first three digits of the prices anymore.
We also applied the Carr-Madan formula to confirm the price $1.738$.

\begin{figure}[!h]
\begin{centering}
\begin{tabular}{cc}
\includegraphics[height=6.5cm,width=6cm]{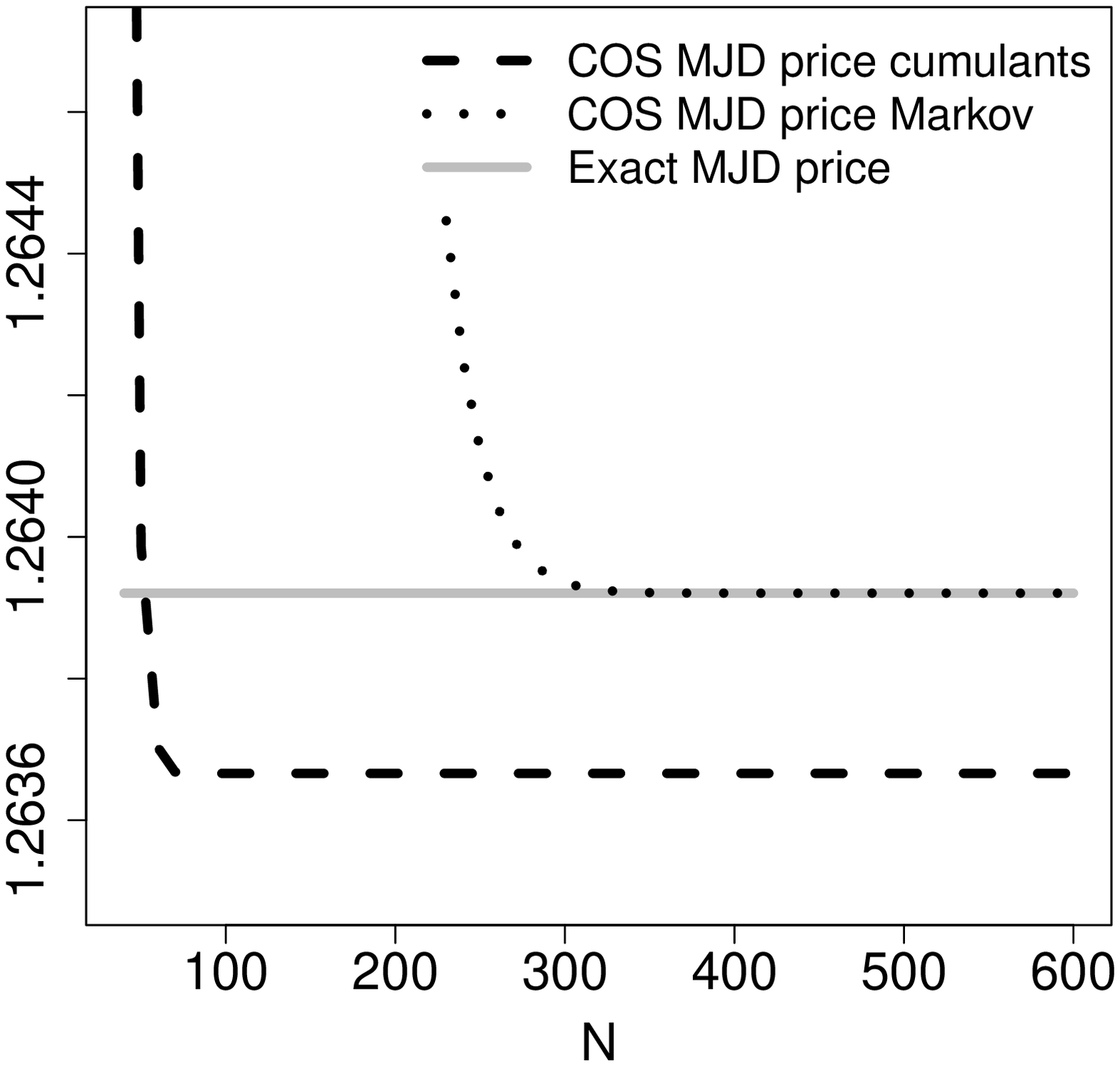}& \includegraphics[height=6.5cm,width=6cm]{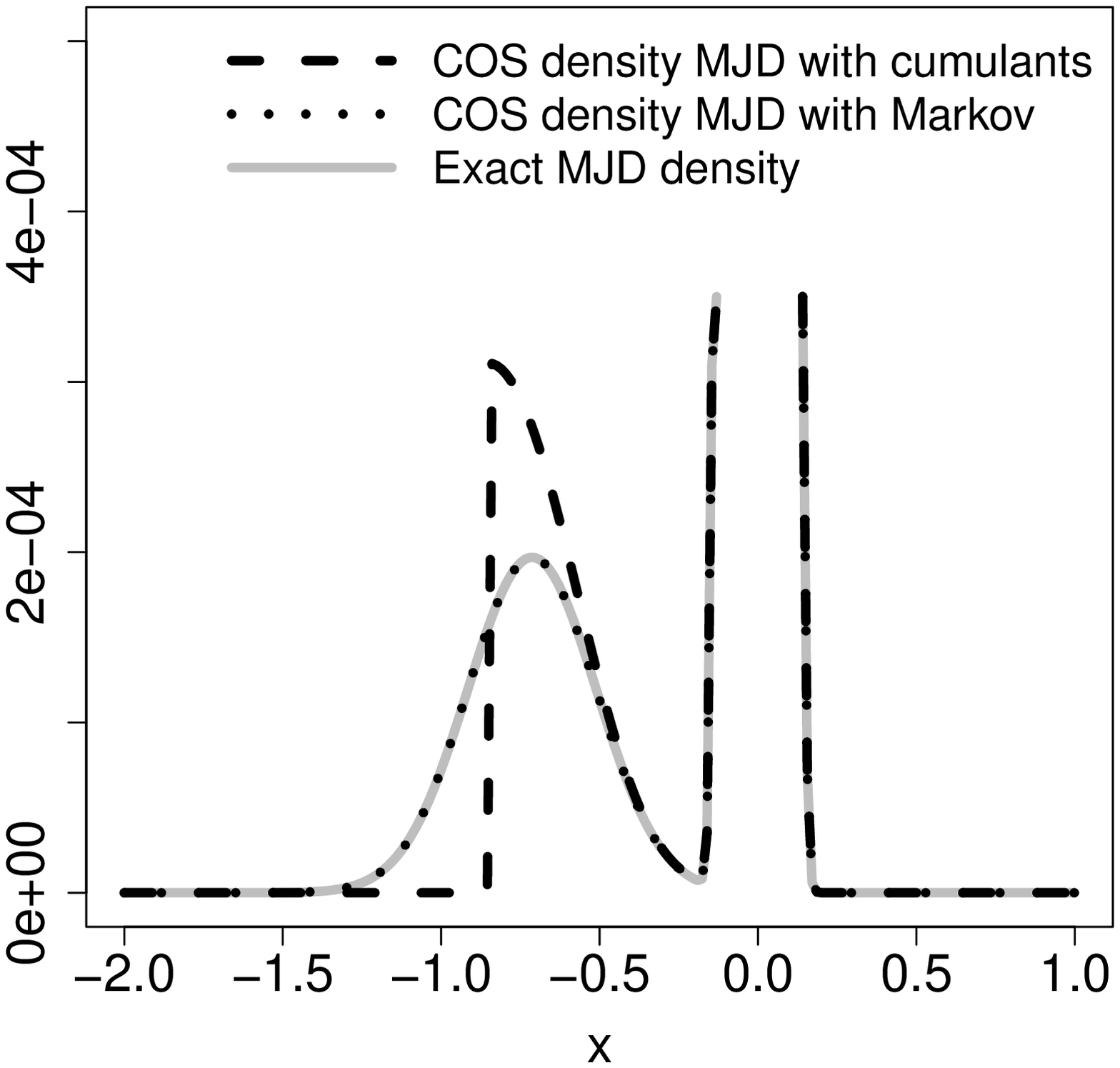}\\
Panel A&Panel B
\end{tabular}
\end{centering}
\caption{\label{fig:Jump-diffusion}COS method for MJD model. Panel A: convergence of COS call prices for model M1. The approximation of the call option price by the COS method, where the truncation range is based on four cumulants, is $1.263666$ but the reference price, i.e. the exact price, is $1.263921$. Panel B: MJD density and COS approximations. The density approximation by the COS method using four cumulants for the truncation range does not change no matter how large we choose $N$. We set $\varepsilon=10^{-7}$.}
\end{figure}

\begin{table}[h!]
  \begin{center}
    \caption{\label{tab:.-CGMY}CGMY model. Parameters $C=0.005$ and $G=M=Y=1.5$.
        Choose $\varepsilon=10^{-7}$. The reference price is 1.02168477497...,
        which we obtained using an approximating range ten times
        larger than the Markov range, i.e. $[-89,89]$, and $N=10^{7}$.
        Increasing the truncation range or $N$ does not change the first
        $10$ decimal digits of the reference price anymore. We set the
        difference between the reference price and the COS approximation to
        zero if the first $10$ decimal digits coincide.}
    \begin{tabular}{p{1cm}p{2cm}p{2cm}p{2.5cm}p{2.5cm}}
      \textbf{Terms \textit{N}} & \textbf{Markov abs. error} & \textbf{Cumul. abs. error} & \textbf{Markov rel. error in BPS} & \textbf{Cumul. rel. error in BPS} \\
      \hline
        $1000$ & $4.8\times10^{-4}$ & $1.07\times10^{-4}$ & $4.74$ & $1.04$\\
        $2000$ & $8.4\times10^{-8}$ & $1.07\times10^{-4}$ & $8.2\times10^{-4}$ & $1.04$\\
        $4000$ & $0$ & $1.07\times10^{-4}$ & $0$ & $1.04$\\
        $8000$ & $0$ & $1.07\times10^{-4}$ & $0$ & $1.04$\\

    \end{tabular}
  \end{center}
\end{table}

\subsection{\label{subsec:calibration}Application to model calibration}

Usually, one proceeds as follows to calibrate a stock price model,
like the Heston model, to real market data: given a set of market
prices of put and call options, minimize the mean square error between
market prices of the options and the corresponding prices predicted
by the model. During the optimization phase, model prices need to
be computed very often, e.g. by the COS method. 

We assume the model is described by $m$ parameters. Let $0<T_{0}\leq T_{1}$
be the smallest and largest maturity of the put and call options, respectively
let $\Theta\subset\mathbb{R}^{m}$ be the space of feasible parameters
of the model.

Let $X_{T}^{\theta}$ be the centralized log returns for the parameter
$\theta\in\Theta$ and maturity $T\in[T_{0},T_{1}]$. To compute the
price of a put or call option with strike $K$ by the COS method via
Corollary \ref{cor:bounded v}, we need to estimate the $n^{th}-$moment
\[
\mu_{n}^{\theta,T}:=E\left[\left(X_{T}^{\theta}\right)^{n}\right],
\]
to obtain an estimate for the truncation range $[-L,L]$ of the
density of $X_{T}^{\theta}$. The $n^{th}-$moment could directly
be determined by differentiating the characteristic function $X_{T}^{\theta}$
exactly $n-$times using a computer-algebra-system. 

In general, evaluating the $n^{th}-$derivative of a characteristic
function each time the COS method is called, might slow down the total
calibration time significantly because the $n^{th}-$derivative of
the characteristic function of some models can be very involved. For
fixed $n\in\mathbb{N}$, we therefore let
\begin{align*}
h:\Theta\times[T_{0},T_{1}] & \to[0,\infty)\\
(\theta,T) & \mapsto\mu_{n}^{\theta,T}.
\end{align*}
In the case of the Heston model, the function $h$ is continuous, see Lemma 1 in \cite{ruckdeschel2013pricing}. We propose to identify a function $\hat{h}$ as an approximation of $h$ upfront \emph{before} the calibration procedure. The evaluation of
$\hat{h}$ is expected to be fast. One might for example obtain $\mu_{n}^{\theta,T}$
for a (large) sample in $\Theta\times[T_{0},T_{1}]$ and defined $\hat{h}$
by a non-linear regression. 

Training $\hat{h}$ to the sample takes some
time but need to be done only once. The idea is to use $\hat{h}(\theta,T)$ as an approximation of $\mu_{n}^{\theta,T}$
to obtain the truncation range via Corollary \ref{cor:bounded v}. Even if $\hat{h}$
is only a rough estimate of the $n^{th}-$moment, we expect this approach
to work well because Markov's inequality usually overestimates the
tail sum, which provides us with a certain ``safety margin''.

We illustrate this idea for the Heston model. First, we define $\Theta$
for the Heston model, which has five parameters: the speed of mean
reversion $\kappa$, the mean level of variance $\eta$, the volatility
of volatility $\theta$, the initial volatility $v_{0}$ and the correlation
$\rho$. We assume
\[
\Theta\times[T_{0},T_{1}]=\left(10^{-3},10\right)\times\left(10^{-3},2\right)^{3}\times\left(-1,1\right)\times\left(\frac{1}{12},2\right).
\]

We randomly choose $5\times10^{5}$ values in $\Theta\times[T_{0},T_{1}]$
and compute $\mu_{8}^{\theta,T}$ for each of those values by a Monte Carlo simulation. (The $8^{th}-$derivative of the characteristic function of the Heston model is extremely involved). 

Then we train a small random forest, see \cite{breiman2001random},
consisting of $50$ decisions trees. We choose a random forest for
interpolation because the calibration is straightforward. We
are confident that other interpolation methods, e.g. based on neural
networks, produce similar results. 

Next, we define a test set and choose $1000$ values in $\Theta\times[T_{0},T_{1}]$
randomly. \cite{junikePerformance} calibrated the Heston model to a time series
of $100$ time points in Summer 2017 of real market data of put and
call options, including calm and more volatile trading days. We add
those $100$ parameters to our test set with a maturity of half a
year. 

For each parameter set, we compute a reference price\footnote{We computed the reference price by the COS method
using an approximating range two times larger than the Markov range and $N=10^{6}$ terms. We verified the reference price by the Carr-Madan formula, where the damping-factor is set to $\alpha=0.1$, we use $N=2^{17}$ points and the truncated Fourier domain is set to $[0, 1200]$.} of three
call options with strikes $K\in\{75,100,125\}$. We choose $S_{0}=100$, $r=0$ and $\varepsilon=10^{-4}$.

 We obtain very satisfactory results approximating prices of call options by the COS method if the truncation range is obtained via Corollary \ref{cor:bounded v}, where the $8^{th}-$moment 
is estimated by the random forest $\hat{h}$. The maximal absolute error for the market data test set over all options is less than $\varepsilon$ for $N\geq500$. The maximal absolute error for the random test set is less than $\varepsilon$ for $N\geq2000$.

Last we comment on the CPU time: computing one option price by the
COS method for $N=1000$ takes about $700$ microseconds using the software R and vectorized code
without parallelization. To evaluate $\hat{h}$ on $10^{4}$ parameter sets
using R's package \emph{randomForest,} also without parallelization,
takes on average $60$ microseconds per set. 

\section{\label{Conclusions}Conclusions}
The COS method is used to compute certain integrals appearing in mathematical ﬁnance by efficiently retrieving the probability
density function of a random variable describing some log-returns from its characteristic function.
The main idea is to approximate a density with infinite support on a finite range by a cosine series.

We provided a new framework to prove the convergence
of the COS method, which enables us to obtain an explicit formula for
the minimal length of the truncation range, given some maximal
error tolerance between the integral and its approximation by the COS
method. 

The formula for the truncation range is based on Markov's inequality and it is assumed that the density of the log-returns has semi-heavy tails. To obtain the truncation range, we need the $n^{th}-$moment of the (centralized) log-returns.
The larger $n$, the sharper Markov's inequality. From numerical experiments, we concluded that $n=8$, $n=6$ or even $n=4$ is a reasonable choice. 

The $n^{th}-$moment could directly be determined from the $n^{th}-$derivative of the characteristic function. In the case of the Heston model, the characteristic function is too involved and instead we successfully employed a machine learning approach to estimate the $8^{th}-$moment.

\section*{Acknowledgments}
We thank two anonymous referees for many valuable comments and suggestions enabling us to improve the quality of the paper.

\appendix 
\numberwithin{equation}{section}

\section{\label{sec:Interval-cumulants}Cosine-coefficients for a put option}
Define $X$ as in Equation (\ref{eq:X}) to compute the price of a put option with strike $K$ by the COS method using Markov's inequality for the truncation range. The characteristic function of $X$ is given by
\[
\varphi_{X}(u)=\varphi_{\log(S_{T})}(u)\exp\left(-iuE[\log(S_{T})]\right),
\]
where the expectation $E[\log(S_{T})]$ can be computed from the characteristic function of $S_T$ by
\[
E[\log(S_{T})]=-i\varphi_{\log(S_{T})}^{\prime}(0).
 \]
 Compute $n^{th}-$moment of $X$ with tricks from Section \ref{subsec:calibration} or by
\[
\mu_{n}=\frac{1}{i^{n}}\left.\frac{\partial^{n}}{\partial u^{n}}\varphi_{X}(u)\right|_{u=0},\quad n\in\{2,8\}.
\]
Choose $\sigma\geq\sqrt{\mu_{2}}$ and set $M$ and $L$ as in Equations (\ref{eq:M}) and (\ref{eq:LL}).
Let 
\begin{equation*}
d:=\min\left(\log\left(K\right)-E[\log(S_{T})],M\right).
\end{equation*}
If $d\geq-M$, compute the cosine-coefficients $v_{k}^{M}$ of $v$, defined in Equation (\ref{v}), by
\begin{align*}
v_{k}^{M}= & \int_{-M}^{M}\left(K-e^{x+E[\log(S_{T})]}\right)^{+}\cos\left(k\pi\frac{x+L}{2L}\right)dx\\
= & K\underbrace{\int_{-M}^{d}\cos\left(k\pi\frac{x+L}{2L}\right)dx}_{=:\Psi_{0}(k)}\\
 & -e^{E[\log(S_{T})]}\underbrace{\int_{-M}^{d}e^{x}\cos\left(k\pi\frac{x+L}{2L}\right)dx}_{=:\Psi_{1}(k)}.
\end{align*}
$\Psi_{0}$ and $\Psi_{1}$ can be computed easily:
\[
\Psi_{0}(k)=\begin{cases}
\frac{2L}{k\pi}\Big(\sin\big(k\pi\dfrac{d+L}{2L}\big)-\sin\big(k\pi\dfrac{-M+L}{2L}\big)\Big), & k>0\\
d+M, & k=0
\end{cases}
\]
and
\begin{align*}
\Psi_{1}(k) & =\left[e^{d}\left(\frac{k\pi}{2L}\sin\left(k\pi\frac{d+L}{2L}\right)+\cos\left(k\pi\frac{d+L}{2L}\right)\right)\right.\\
 & \left.-e^{-M}\left(\frac{k\pi}{2L}\sin\left(k\pi\frac{-M+L}{2L}\right)+\cos\left(k\pi\frac{-M+L}{2L}\right)\right)\right]\frac{1}{1+\left(\frac{k\pi}{2L}\right)^{2}}.
\end{align*}
Choose $N$ large enough and set 
\[
c_{k}^{L}:=\frac{1}{L}\Re\left\{ \varphi_{X}\left(\frac{k\pi}{2L}\right)e^{i\frac{k\pi}{2}}\right\} ,\quad k=0,1,...,N.
\]
Then it holds for the price of a European put option
\[
e^{-rT}E[(K-S_{T})^{+}]\approx e^{-rT}\sum_{k=0}^{N}{}^{\prime}c_{k}^{L}v_{k}^{M},
\]
where $\sum{}^{\prime}$ indicates that the first term in the summation
is weighted by one-half. The price of a call option can be computed using the put-call parity.

\end{document}